
\documentclass[reqno, eucal]{amsart}

\usepackage[mathscr]{eucal} 
\usepackage[dvips]{graphicx}
\usepackage[dvips]{color}
\usepackage{amsmath,amsfonts,amssymb,amsthm,amscd}
\usepackage{epic}
\usepackage{eepic}
\usepackage{longtable}
\usepackage{array}
\usepackage{here}
\setlength{\oddsidemargin}{0.5cm}
\setlength{\evensidemargin}{0.6cm}
\setlength{\textwidth}{15cm}
\setlength{\textheight}{22cm}

\setcounter{tocdepth}{2}
\numberwithin{equation}{section}

\newtheorem{theorem}{Theorem}[section]
\newtheorem{lemma}[theorem]{Lemma}

\newtheorem{proposition}[theorem]{Proposition}
\newtheorem{corollary}[theorem]{Corollary}

\theoremstyle{definition}

\newtheorem{example}[theorem]{Example}
\newtheorem{remark}[theorem]{Remark}

\newcommand{\Z}{{\mathbb Z}}
\newcommand{\R}{{\mathbb R}}
\newcommand{\C}{{\mathbb C}}


\vfuzz2pt 
\hfuzz2pt 

\begin{document}

\title[Inhomogeneous TAZRP]{Inhomogeneous generalization of\\
multispecies totally asymmetric zero range process}

\author{Atsuo Kuniba}
\email{atsuo@gokutan.c.u-tokyo.ac.jp}
\address{Institute of Physics, University of Tokyo, Komaba, Tokyo 153-8902, Japan}

\author{Shouya Maruyama}
\email{maruyama@gokutan.c.u-tokyo.ac.jp}
\address{Institute of Physics, University of Tokyo, Komaba, Tokyo 153-8902, Japan}

\author{Masato Okado}
\email{okado@sci.osaka-cu.ac.jp}
\address{Department of Mathematics, Osaka City University, 
3-3-138, Sugimoto, Sumiyoshi-ku, Osaka, 558-8585, Japan}


\maketitle

\vspace{0.5cm}
\begin{center}{\bf Abstract}
\end{center}
The $n$-species totally asymmetric zero range process ($n$-TAZRP) 
on one-dimensional periodic chain studied recently by the authors
is a continuous time Markov process where arbitrary number of 
particles can occupy the same sites and hop to the adjacent sites 
only in one direction with a priority constraint according to their species.
In this paper we introduce an $n$-parameter generalization of the 
$n$-TAZRP having inhomogeneous transition rate.
The steady state probability is obtained in 
a matrix product form and also by an algorithm related to combinatorial $R$.

\vspace{0.5cm}

\section{Introduction}\label{sec:intro}

Zero range processes on lattice 
are stochastic particle systems modeling various nonlinear dynamics
in biology, chemistry, physics, networks and so on  
where the jump rates are determined by those sharing the same departure site 
\cite{S, A, KL, GSS, EH}.
In \cite{KMO3} a new multispecies totally asymmetric zero range process, 
called $n$-TAZRP,
on one-dimensional periodic chain was proposed.
There are $n$-species of particles that can occupy sites of the chain without 
an exclusion rule, and they hop to the adjacent sites only in one direction with 
the constraint that larger species ones have the priority to do so\footnote{
The enumeration of particles is reversed from \cite{KMO3} where the 
smaller species ones had the priority.}.
It is the first example of $n$-species models of zero range interaction
that allows a matrix product formula for arbitrary $n$  
and possesses a rich integrable structure \cite{KMO3,KMO4} 
related to crystals of quantum groups \cite{Ka1}
and the tetrahedron equation \cite{Zam80}.

The proposal of the $n$-TAZRP was 
preceded by the discovery \cite{KMO1,KMO2} of the 
quite parallel features in 
the $n$-species totally asymmetric simple exclusion process ($n$-TASEP).
We refer to \cite{PEM,AKSS,TW} for results on 
more general $n$-ASEP that have been established for arbitrary $n$.
In our approach,  the $n$-TAZRP and the $n$-TASEP 
on the length $L$ periodic chain turn out to be the sister models
associated with crystals \cite{NY} of the symmetric 
and the anti-symmetric tensor representation of the quantum affine algebra 
$U_q(\widehat{sl}_L)$ \cite{D86,J}, respectively.
Their matrix product formulae \cite{KMO1,KMO3} are traced back to
the factorization of quantum $R$ matrices \cite{KOS}
based on the $\mathscr{R}$ and the 
$\mathscr{L}$-operators obeying the tetrahedron equation, respectively.
The integer $n$ plays the role of a system size of the associated three-dimensional
lattice models.

The purpose of this paper is to introduce and study an $n$-parameter generalization 
of the $n$-TAZRP having the inhomogeneous transition rate $w_1, \ldots, w_n$.
We call it $n$-species inhomogeneous totally asymmetric 
zero range process ($n$-iTAZRP).
Under an appropriate normalization, the steady state probabilities of the $n$-iTAZRP 
become homogeneous
polynomials of $w_1, \ldots, w_n$ with nonnegative integer coefficients.
We establish a matrix product formula and a combinatorial algorithm to calculate them
as the generating functions of the set $B({\bf m})$ (\ref{mkr}) 
with a certain {\em weight}  $W$ (\ref{hzks}).
The procedure is an iTAZRP analogue of the queueing process approach 
to TASEP \cite{FM}, and provides 
$n$-kinds of statistics on the crystal of the symmetric tensor representation.

A similar inhomogeneous generalization has been known for the $n$-TASEP  
with $n=2$ \cite{AL} and general $n$ \cite{AM}. 
See also \cite{LW}.
In fact our derivation of the main results in this paper is based on 
the so-called generalized hat relation having the same form as \cite{AM}.
It is an open problem to incorporate it in the framework of \cite{KMO3,KMO4}.

The layout of the paper is as follows.
In Section \ref{sec:taz} the $n$-iTAZRP and its steady states are defined.
In Section \ref{sec:mpf} a matrix product formula for the steady state probability
is presented.
In Section \ref{sec:mp} a combinatorial algorithm for calculating the 
steady state probability is given.
Section \ref{sec:d} is a discussion.
Our main results are Theorem \ref{th:P} and Theorem \ref{th:we}, and they are 
consequences of the generalized hat relation in Proposition \ref{pr:hat}.

Throughout the paper we set
$[i,j]=\{k \in \Z\mid i \le k \le j\}$, and 
use the characteristic function $\theta$ defined by 
$\theta(\text{true}) = 1, \theta(\text{false}) = 0$ and the symbol
$\delta^{\alpha_1,\ldots, \alpha_m}_{\beta_1,\ldots, \beta_m} = 
\prod_{j=1}^m\theta(\alpha_j=\beta_j)$.

\section{$n$-$\mathrm{i}$TAZRP}\label{sec:taz}
\subsection{Definition of $n$-iTAZRP}
Consider a periodic one-dimensional chain $\Z_L$ with $L$ sites.
Each site $i \in \Z_L$ is assigned with a local state 
$\sigma_i=(\sigma_i^1,\ldots, \sigma^n_i)\in (\Z_{\ge 0})^n$ which is 
interpreted as an assembly of $n$ species of particles as
\begin{equation}\label{cie}
\begin{picture}(260,25)(-80,0)

\put(-1,5){$\overbrace{1 \ldots 1}^{\sigma^1_i} \,
\overbrace{2 \ldots 2}^{\sigma^2_i} \,
\ldots \,
\overbrace{n \ldots n}^{\sigma^n_i} $}
\put(7,0){\put(-16,1){\line(1,0){115}}\put(-16,1){\line(0,1){12}}
\put(99,1){\line(0,1){12}}}

\end{picture}
\end{equation}
The ordering of particles within a site does not matter.
A local state $\alpha$ is specified uniquely either by 
{\em multiplicity representation}
$\alpha=(\alpha^1,\ldots, \alpha^n) \in (\Z_{\ge 0})^n$ as above or 
{\em multiset representation}
$\alpha=(\alpha_1,\ldots, \alpha_r) \in [1,n]^r$ 
with $1 \le \alpha_1 \le \cdots \le \alpha_r\le n$.
For example the 4-iTAZRP local state
$(3,0,2,1)$ in the former is $(1,1,1,3,3,4)$ in the latter.
In general they are related by
$\alpha^a = \#\{j \in [1,r]\mid \alpha_j = a\}$ and 
$r = |\alpha|:=\alpha^1 + \cdots + \alpha^n$.

Let $({\alpha}, {\beta})$ 
and $({\gamma},{\delta})$ be pairs of local states.
Let $(\beta_1, \ldots, \beta_r)$ 
be the multiset representation of the $\beta$, hence 
$1 \le \beta_1 \le \cdots \le \beta_r \le n$.
For the two pairs we define $>$ by
\begin{align}\label{kyk}
({\alpha}, {\beta}) > ({\gamma},{\delta})
\overset{\text{def}}{\Longleftrightarrow}
\gamma = \alpha \cup \{\beta_{k}, \beta_{k+1}, \ldots, \beta_r\},\;
\delta=(\beta_1,\beta_2, \ldots, \beta_{k-1})\;\;
\text{for some}\; k \in [1,r],
\end{align}
where $\alpha \cup \{\beta_{k}, \beta_{k+1}, \ldots, \beta_r\}$ 
is a union as a multiset.
For instance in the multiset representation we have\footnote{
Here and in what follows, 
a multiset (set accounting for multiplicity of elements), say  $\{1,1,3,5,6\}$,
is abbreviated to $11356$, which does cause a confusion since
all the examples in this paper shall be concerned with the case $n\le 9$.}
\begin{equation}\label{ykw}
\begin{split}
(124,2335)&> (1245,233),  (12345,23), (123345,2), (1223345,\emptyset),\\
(235,12446) &> (2356,1244), (23456,124), (234456,12), 
(2234456,1), (12234456, \emptyset),\\
(\emptyset,255) & > (5,25), (55,2), (255,\emptyset),\\
(3446,\emptyset) &> \text{none}.
\end{split}
\end{equation} 

Let $w_1, \ldots, w_n \in \R_{> 0}$ be parameters.
By {\em inhomogeneous 
$n$-species totally asymmetric zero range process} ($n$-iTAZRP)
we mean a stochastic process on $\Z_L$ 
in which neighboring pairs of local states
$(\sigma_i, \sigma_{i+1})=(\alpha, \beta)$ 
change into $(\sigma'_i, \sigma'_{i+1})=(\gamma, \delta)$ such that 
$({\alpha}, {\beta}) > ({\gamma},{\delta})$  
with the transition rate $w_{\beta_k}$ in the situation (\ref{kyk})
depicted as 
\begin{equation*}
\begin{picture}(120,50)(27,-13)


\put(-85,5){${\alpha}_1\, \ldots \,{\alpha}_s$}
\put(-13,5){${\beta}_1 \ldots \ldots .\,. \, {\beta}_r$}
\put(-110,0){

\put(65,26){${\beta}_k \ldots {\beta}_r$}
\put(41.5,22){\vector(0,-1){9}}\put(41.5,22){\line(1,0){83}}
\put(124.5,22){\line(0,-1){8}}

\put(0,0){\line(1,0){166}}\put(0,0){\line(0,1){12}}
\put(83,0){\line(0,1){12}}\put(166,0){\line(0,1){12}}

\put(28,-12){$\sigma_i=\alpha$}\put(105,-11){$\sigma_{i+1}=\beta$}

}

\put(53,27){rate $w_{\beta_k}$}
\put(63,6){\vector(1,0){15}}

\put(-10,0){
\put(100,5){${\alpha}_1 \ldots {\alpha}_s \,
{\beta}_k \ldots {\beta}_r$}
\put(195,5){${\beta}_{1} \,\ldots \,{\beta}_{k-1}$}
\put(95,0){\put(0,0){\line(1,0){166}}\put(0,0){\line(0,1){12}}
\put(83,0){\line(0,1){12}}\put(166,0){\line(0,1){12}}
\put(28,-12){$\sigma'_i=\gamma$}\put(104,-11){$\sigma'_{i+1}=\delta$}
}}
\end{picture}
\end{equation*}
Here $1 \le {\beta}_1 \le \cdots \le 
{\beta}_r \le n, \, i \in \Z_L $ and 
$k \in [1,r]$ is arbitrary. 
Note that $\beta_k = \mathrm{min}(\gamma\!\setminus \!\alpha)$ 
in the multiset representation.
Thus the rate can take arbitrary values 
depending on the {\em minimum} of the species of the 
particles that have hopped to the left. 
As an example, the first line in (\ref{ykw}) implies that 
the following local transitions take place with the respective rate:
\begin{align*}
\begin{picture}(500,175)(-100,-123)
\put(53,40){local transitions} \put(210,40){rate}
\put(12,3.5){124} \put(50,3.5){2335}
\put(97,0){\put(12,3.5){1245} \put(55.5,3.5){233}}
\put(216,3.5){$w_5$}
\put(38,25){5}
\put(20,22){\vector(0,-1){8}}\put(20,22){\line(1,0){40}}
\put(60,22){\line(0,-1){7}}
\put(0,0){\line(1,0){80}}
\put(0,0){\line(0,1){10}}
\put(40,0){\line(0,1){10}}
\put(80,0){\line(0,1){10}}
\put(85, 6){\vector(1,0){10}}
\put(100,0){
\put(0,0){\line(1,0){80}}
\put(0,0){\line(0,1){10}}
\put(40,0){\line(0,1){10}}
\put(80,0){\line(0,1){10}}}
%
\put(0,-40){
\put(12,3.5){124} \put(50,3.5){2335}
\put(97,0){\put(11,3.5){12345} \put(58,3.5){23}}
\put(216,3.5){$w_3$}
\put(36,25){35}
\put(20,22){\vector(0,-1){8}}\put(20,22){\line(1,0){40}}
\put(60,22){\line(0,-1){7}}
\put(0,0){\line(1,0){80}}
\put(0,0){\line(0,1){10}}
\put(40,0){\line(0,1){10}}
\put(80,0){\line(0,1){10}}
\put(85, 6){\vector(1,0){10}}
\put(100,0){
\put(0,0){\line(1,0){80}}
\put(0,0){\line(0,1){10}}
\put(40,0){\line(0,1){10}}
\put(80,0){\line(0,1){10}}}
}
\put(0,-80){
\put(12,3.5){124} \put(50,3.5){2335}
\put(97,0){\put(8,3.5){123345} \put(60,3.5){2}}
\put(216,3.5){$w_3$}
\put(34,25){335}
\put(20,22){\vector(0,-1){8}}\put(20,22){\line(1,0){40}}
\put(60,22){\line(0,-1){7}}
\put(0,0){\line(1,0){80}}
\put(0,0){\line(0,1){10}}
\put(40,0){\line(0,1){10}}
\put(80,0){\line(0,1){10}}
\put(85, 6){\vector(1,0){10}}
\put(100,0){
\put(0,0){\line(1,0){80}}
\put(0,0){\line(0,1){10}}
\put(40,0){\line(0,1){10}}
\put(80,0){\line(0,1){10}}}
}
\put(0,-120){
\put(12,3.5){124} \put(50,3.5){2335}
\put(97,0){\put(6,3.5){1223345} }
\put(216,3.5){$w_2$}
\put(30.5,25){2335}
\put(20,22){\vector(0,-1){8}}\put(20,22){\line(1,0){40}}
\put(60,22){\line(0,-1){7}}
\put(0,0){\line(1,0){80}}
\put(0,0){\line(0,1){10}}
\put(40,0){\line(0,1){10}}
\put(80,0){\line(0,1){10}}
\put(85, 6){\vector(1,0){10}}
\put(100,0){
\put(0,0){\line(1,0){80}}
\put(0,0){\line(0,1){10}}
\put(40,0){\line(0,1){10}}
\put(80,0){\line(0,1){10}}}
}
\end{picture}
\end{align*}

This dynamics is {\em totally asymmetric} in that 
particles can hop only to the left adjacent site.
Their interaction is of {\em zero range} in that 
the hopping priority for larger species particles is respected 
only among those occupying the same departure site and 
no constraint is imposed on the status of the destination site 
nor the number of particles that hop at a transition.  
It is {\em inhomogeneous} for $n\ge 2$
in that the transition rate depends on those hopping particles.
A pair $(\alpha, \beta)$ of adjacent local states has $|\beta|$ 
possibilities to change into.
(The symbol $|\beta|$ has been defined after (\ref{cie}).)
This model was first introduced in \cite{KMO3}  and further studied in \cite{KMO4} 
for the homogeneous case $w_1=\cdots = w_n = 1$ in the opposite convention that 
smaller species particles have the priority to move.

The $n$-iTAZRP dynamics obviously preserves 
the number of particles of each species.
Thus the problem splits into {\em sectors} labeled with 
{\em multiplicity} ${\bf m}=(m_1,\ldots, m_n) \in (\Z_{\ge 0})^n$ of 
the species of particles:
\begin{align}\label{kgc}
S({\bf m}) = 
\{{\boldsymbol \sigma}=(\sigma_1,\ldots, \sigma_L)\mid
\sigma_i = (\sigma^{1}_i, \ldots, \sigma^{n}_i)  \in (\Z_{\ge 0})^n,\;
\sum_{i=1}^L \sigma_i= {\bf m}\}.
\end{align}
A sector $S({\bf m})$ such that $m_a \ge 1$ for all $a \in [1,n]$
is called {\em basic}.
Non-basic sectors are equivalent to a basic sector for $n'$-iTAZRP with some 
$n'<n$ by a suitable relabeling of species and $w_k$'s.
Henceforth we shall exclusively deal with basic sectors in this paper.

A local state $\sigma_i$ in (\ref{kgc}) 
can take $N=\prod_{a=1}^n(m_a+1)$ possibilities in view of (\ref{cie}).
Let $\{|{\boldsymbol \sigma} \rangle = |\sigma_1,\ldots, \sigma_L\rangle\}$ 
be a basis of
$(\C^N)^{\otimes L}$.
Denoting by ${\mathbb P}(\sigma_1,\ldots, \sigma_L; t)$ the probability of finding 
the system in the configuration 
${\boldsymbol \sigma}=(\sigma_1,\ldots, \sigma_L)$ at time $t$,  we set 
\begin{align*}
|P(t)\rangle
= \sum_{{\boldsymbol \sigma} \in S({\bf m})}
{\mathbb P}(\sigma_1,\ldots, \sigma_L; t)|\sigma_1,\ldots, \sigma_L\rangle.
\end{align*}
This actually belongs to a
subspace of $(\C^N)^{\otimes L}$ of dimension 
$\# S({\bf m}) = \prod_{a=1}^n\binom{L+m_a-1}{m_a}$ which is in general much 
smaller than $N^L$ reflecting the constraint in (\ref{kgc}).

Our $n$-iTAZRP is a continuous-time Markov process 
governed by the master equation
\begin{align*}
\frac{d}{dt}|P(t)\rangle
= H_{\mathrm{iTAZRP}} |P(t)\rangle,
\end{align*}
where the Markov matrix has the form
\begin{align}\label{hrk2}
H_{\mathrm{iTAZRP}}  = \sum_{i \in \Z_L} h_{i,i+1},\qquad
h |\alpha, \beta\rangle =\sum_{\gamma,\delta}h^{\gamma,\delta}_{\alpha,\beta}
|\gamma,\delta\rangle.
\end{align}
Here $h_{i,i+1}$ is the local Markov matrix that  
acts as $h$ on the $i$-th and the $(i+1)$-th components non-trivially and 
as the identity elsewhere.
If the transition rate of the adjacent pair of local states
$(\alpha, \beta) \rightarrow (\gamma,\delta)$ is denoted by 
$w(\alpha\beta \rightarrow \gamma\delta)$,
the matrix element of the $h$ is given by 
$h^{\gamma,\delta}_{\alpha,\beta}
= w(\alpha\beta \rightarrow \gamma\delta)
-\theta\bigl((\alpha,\beta)=(\gamma,\delta)\bigr)
\sum_{\gamma',\delta'}w(\alpha\beta \rightarrow \gamma'\delta')$.
Our $n$-iTAZRP corresponds to the choice 
$w(\alpha\beta \rightarrow \gamma\delta) = 
\theta\bigl((\alpha,\beta)>(\gamma,\delta)\bigr)w_{\mathrm{min}(\gamma\setminus\alpha)}$, 
therefore the general formula,
which is independent of $w(\alpha\beta \rightarrow \alpha\beta)$, gives
\begin{align}
h^{\gamma,\delta}_{\alpha,\beta}&=
\begin{cases}
w_{\mathrm{min}(\gamma\setminus\alpha)} & \text{if }\;(\alpha, \beta) > (\gamma,\delta),\\
-g(\beta) & \text{if } \;(\alpha, \beta)  = (\gamma,\delta),\\
0 & \text{otherwise}.
\end{cases}\label{hdef}\\
\qquad g(\beta) &= w_1\beta^1+\cdots + w_n\beta^n
\;\;\text{for}\; \;\beta=(\beta^1,\ldots, \beta^n)\;\;
\text{in multiplicity representation}.
\label{gdef}
\end{align}

\subsection{Steady state}
Given a system size $L$ and a sector $S({\bf m})$
there is a unique vector 
\begin{align*}
|\mathscr{P}_L({\bf m})\rangle = \sum_{{\boldsymbol \sigma} \in S({\bf m})}
\mathbb{P}({\boldsymbol \sigma}) |{\boldsymbol \sigma} \rangle
\end{align*} 
up to a normalization, called the {\em steady state},
which satisfies $H_{\mathrm{iTAZRP}}  
|\mathscr{P}_L({\bf m})\rangle=0$ hence is 
time-independent.
From (\ref{hrk2}) 
we see that $\mathbb{P}({\boldsymbol \sigma})$ is the 
(unique up to overall) solution 
to the linear equation
\begin{align}\label{htn}
\sum_{i \in \Z_L} \sum_{\sigma'_i, \sigma'_{i+1}}
h^{\sigma_i, \sigma_{i+1}}_{\sigma'_i, \sigma'_{i+1}}
\mathbb{P}(\sigma_1,\ldots, \sigma_{i-1},
\sigma'_i, \sigma'_{i+1},
\sigma_{i+2},\ldots, \sigma_L) = 0
\quad \forall (\sigma_1,\ldots, \sigma_L) \in S({\bf m}),
\end{align}
where $\sigma'_i, \sigma'_{i+1}$ range over $(\Z_{\ge 0})^n$.
In what follows we shall take ${\mathbb P}({\boldsymbol \sigma})$'s 
to be the degree $(n-1)(L-1)$ 
homogeneous {\em polynomials} 
in $\Z_{\ge 0}[w_1, \ldots, w_n]$\footnote{That such a choice is possible 
will be shown by Theorem \ref{th:P} and Lemma \ref{le:iii}.} such that 
\begin{align*}
\sum_{{\boldsymbol \sigma} \in S({\bf m})}
\mathbb{P}({\boldsymbol \sigma})|_{w_1=\cdots = w_n=1} = \prod_{a=1}^n
\binom{L-1+\ell_a}{\ell_a},
\end{align*}
where $\ell_a$ is specified in (\ref{mkr:akci}) and 
the right hand side is $\# B({\bf m})$ (\ref{mkr}).
This is natural in view of Theorem \ref{th:we} and agrees with \cite{KMO3}.
The unnormalized 
${\mathbb P}({\boldsymbol \sigma})$ will be called the steady state probability
by abusing the terminology. 

For $n=1$, all the local transitions have the common rate $w_1$ 
hence there is actually no inhomogeneity.
The steady state is trivial 
under the present periodic boundary condition in that  
all the configurations ${\boldsymbol \sigma}=(\sigma_1,\ldots, \sigma_L) \in 
(\Z_{\ge 0})^L$ in a given sector are realized with the equal probability.
This can be seen by noting that the numbers of configurations 
jumping into and out the ${\boldsymbol \sigma}$
are both equal to 
$\sigma_1+\cdots + \sigma_L$.
The relation (\ref{htn}) reduces to 
$\sum_{i \in \Z_L} \sum_{\alpha, \beta}
h^{\sigma_i, \sigma_{i+1}}_{\alpha,\beta}=0$ 
which is valid because of 
$\sum_{\alpha, \beta}h^{\gamma, \delta}_{\alpha,\beta}
=(\gamma-\delta)w_1$.
The steady states for the $n$-iTAZRP with $n \ge 2$ are nontrivial.

\begin{example}\label{ex:LL}
We present the steady state in small sectors of
2-iTAZRP and 3-iTAZRP in the form
\begin{align*}
|\mathscr{P}_L({\bf m})\rangle = |\xi_L({\bf m})\rangle
+ C|\xi_L({\bf m})\rangle + \cdots + 
C^{L-1} |\xi_L({\bf m})\rangle
\end{align*}
respecting the symmetry 
$H_{\mathrm{iTAZRP}} C=CH_{\mathrm{iTAZRP}} $ 
under the $\Z_L$ cyclic shift
 $C: |\sigma_1, \sigma_2,\ldots, \sigma_L\rangle \mapsto 
 |\sigma_L, \sigma_1, \ldots, \sigma_{L-1}\rangle$.
The choice of the vector $|\xi_L({\bf m})\rangle$ is not unique.
We employ multiset representation like 
$|\emptyset, 3, 122\rangle$, which would have looked as 
$|000,001,120\rangle$ in the multiplicity representation for the 3-iTAZRP.

For the 2-iTAZRP  one has
\begin{align*}
|\xi_2(1,1)\rangle&= (w_1+w_2)|\emptyset, 12\rangle + w_2| 1, 2\rangle,\\
|\xi_3(1,1)\rangle& = (w_1^2+w_1w_2+w_2^2)|\emptyset, \emptyset, 12\rangle +
w_2^2|\emptyset, 1, 2\rangle +
w_2(w_1+w_2)|\emptyset, 2, 1\rangle,\\
|\xi_4(1,1)\rangle& = (w_1+w_2)(w_1^2+w_2^2)|\emptyset, \emptyset, \emptyset, 12\rangle +
w_2^3|\emptyset, \emptyset, 1,2 \rangle +
w_2^2 (w_1 + w_2)|\emptyset, 1,\emptyset, 2 \rangle \\
&+w_2 (w_1^2 + w_1w_2 + w_2^2)|\emptyset, \emptyset, 2,1 \rangle,\\
|\xi_2(2,1)\rangle& =
(2 w_1 + w_2)|\emptyset, 112\rangle +(w_1 + w_2) |1,12\rangle + w_2|2, 11\rangle,\\
|\xi_3(2,1)\rangle &=
(3 w_1^2 + 2 w_1 w_2 + w_2^2)|\emptyset, \emptyset, 112\rangle +
(w_1^2 + w_1 w_2 + w_2^2)|\emptyset, 1,12\rangle
+ w_2 (2 w_1 + w_2)|\emptyset, 2, 11\rangle \\
&+
w_2^2|\emptyset, 11, 2\rangle +
(2 w_1^2 + 2 w_1 w_2 + w_2^2)|\emptyset, 12,1\rangle +
w_2 (w_1 + w_2)|1,1,2\rangle,\\
|\xi_2(1,2)\rangle &=
(w_1 + w_2)|\emptyset,122\rangle + w_2|1,22\rangle + w_2 |2,12\rangle,\\
|\xi_3(1,2)\rangle &=
(w_1^2 + w_1 w_2 + w_2^2)|\emptyset, \emptyset, 122\rangle +
w_2^2|\emptyset, 1, 22\rangle +
w_2 (w_1 + w_2)|\emptyset, 2, 12\rangle +
w_2^2|\emptyset, 12, 2\rangle \\
&+
w_2 (w_1 + w_2)|\emptyset, 22, 1\rangle +
w_2^2|1,2,2\rangle,\\
|\xi_2(3,1)\rangle&=
(3 w_1 + w_2) |\emptyset, 1112\rangle 
+(2 w_1 + w_2) |{1}, 112\rangle 
+w_2 |{2}, 111\rangle + 
 (w_1 + w_2)|11, 12\rangle,\\
|\xi_3(3,1)\rangle&=
 (6 w_1^2 + 3 w_1 w_2 + w_2^2)|\emptyset, \emptyset, 1112\rangle 
+ (3 w_1^2 + 2 w_1 w_2 + w_2^2) |\emptyset, {1}, 112\rangle \\
&+ w_2 (3 w_1 + w_2)|\emptyset, {2}, 111\rangle
+ (w_1^2 + w_1 w_2 + w_2^2) |\emptyset, 11, 12\rangle\\
& + (3 w_1^2 + 3 w_1 w_2 + w_2^2)|\emptyset, 12, 11\rangle
+ w_2^2 |\emptyset, 111, {2}\rangle \\
&+ (5 w_1^2 + 3 w_1 w_2 + w_2^2)|\emptyset, 112, {1}\rangle 
 +(2 w_1^2 + 2 w_1 w_2 + w_2^2) |{1}, {1}, 12\rangle \\
 &+ w_2 (2 w_1 + w_2)|{1}, {2}, 11\rangle 
 +w_2 (w_1 + w_2) |{1}, 11, {2}\rangle,\\
|\xi_3(2,2)\rangle&=
w_2 (w_1 + w_2)|1, 1, 22\rangle 
+ w_2 (w_1 + w_2)|1, 2, 12\rangle 
+ w_2^2 |1, 12, 2\rangle + w_2^2 |2, 2, 11\rangle\\ 
&+ (w_1^2 + w_1 w_2 + w_2^2)|\emptyset, 1, 122\rangle 
 + w_2 (2 w_1 + w_2) |\emptyset, 2, 112\rangle 
+ w_2^2 |\emptyset, 11, 22\rangle \\
&+ w_2 (w_1 + w_2)|\emptyset, 12, 12\rangle 
+ 
 w_2 (2 w_1 + w_2)|\emptyset, 22, 11\rangle\\ 
 &+ w_2^2|\emptyset, 112, 2\rangle + 
 (2 w_1^2 + 2 w_1 w_2 + w_2^2)|\emptyset, 122, 1\rangle
+ (3 w_1^2 + 2 w_1 w_2 + w_2^2)|\emptyset, \emptyset, 1122\rangle,\\
|\xi_2(1,3)\rangle&=
 w_2|1, 222\rangle 
 +  w_2 |2, 122\rangle +  w_2 |12, 22\rangle 
 + (w_1 + w_2) |\emptyset, 1222\rangle,\\
|\xi_3(1,3)\rangle&=
w_2^2 |1, 2, 22\rangle 
+ w_2^2 |1, 22, 2\rangle 
+ w_2^2 |2, 2, 12\rangle + 
 w_2^2 |\emptyset, 1, 222\rangle 
 + w_2 (w_1 + w_2) |\emptyset, 2, 122\rangle \\
 &+ w_2^2 |\emptyset, 12, 22\rangle 
 + w_2 (w_1 + w_2) |\emptyset, 22, 12\rangle + 
 w_2^2 |\emptyset, 122, 2\rangle 
 +w_2 (w_1 + w_2) |\emptyset, 222, 1\rangle\\ 
 &+ (w_1^2 + w_1 w_2 + w_2^2) |\emptyset, \emptyset, 1222\rangle.
\end{align*}
For instance the both $|\mathscr{P}_2(1,2)\rangle
= |\xi_2(1,2)\rangle +C|\xi_2(1,2)\rangle$ and 
$|\mathscr{P}_2(2,1)\rangle
= |\xi_2(2,1)\rangle +C|\xi_2(2,1)\rangle$ 
consist of the six states with the following 
probability and the transition rate.

\begin{equation*}
\begin{picture}(0,190)(-20,19)

\put(-120,20){
\scalebox{0.9}{
\put(0,15){
\put(-20,180){$(\emptyset,122)$}
\put(58,135){$(1,22)$}
\put(58,45){$(12,2)$}
\put(-98,45){$(22,1)$}
\put(-98,135){$(2,12)$}
\put(-20,0){$(122,\emptyset)$}

\put(5,0){
\put(-8.7,171){\vector(0,1){4}}\put(-10,167.5){$\cdot$}\put(-10,165){$\cdot$}\put(-10,162.5){$\cdot$}\put(-10,160){$\cdot$}\put(-10,157.5){$\cdot$}\put(-10,155){$\cdot$}\put(-10,152.5){$\cdot$}\put(-10,150){$\cdot$}\put(-10,147.5){$\cdot$}\put(-10,145){$\cdot$}\put(-10,142.5){$\cdot$}\put(-10,140){$\cdot$}\put(-10,137.5){$\cdot$}
\put(-10,135){$\cdot$}\put(-10,132.5){$\cdot$}\put(-10,130){$\cdot$}\put(-10,127.5){$\cdot$}\put(-10,125){$\cdot$}\put(-10,122.5){$\cdot$}\put(-10,120){$\cdot$}\put(-10,117.5){$\cdot$}\put(-10,115){$\cdot$}\put(-10,112.5){$\cdot$}\put(-10,110){$\cdot$}\put(-10,107.5){$\cdot$}\put(-10,105){$\cdot$}\put(-10,102.5){$\cdot$}\put(-10,100){$\cdot$}\put(-10,97.5){$\cdot$}\put(-10,95){$\cdot$}\put(-10,92.5){$\cdot$}\put(-10,90){$\cdot$}\put(-10,87.5){$\cdot$}\put(-10,85){$\cdot$}\put(-10,82.5){$\cdot$}\put(-10,80){$\cdot$}\put(-10,77.5){$\cdot$}\put(-10,75){$\cdot$}\put(-10,72.5){$\cdot$}\put(-10,70){$\cdot$}\put(-10,67.5){$\cdot$}\put(-10,65){$\cdot$}\put(-10,62.5){$\cdot$}\put(-10,60){$\cdot$}\put(-10,57.5){$\cdot$}\put(-10,55){$\cdot$}\put(-10,52.5){$\cdot$}\put(-10,50){$\cdot$}\put(-10,47.5){$\cdot$}\put(-10,45){$\cdot$}\put(-10,42.5){$\cdot$}\put(-10,40){$\cdot$}\put(-10,37.5){$\cdot$}\put(-10,35){$\cdot$}\put(-10,32.5){$\cdot$}\put(-10,30){$\cdot$}\put(-10,27.5){$\cdot$}\put(-10,25){$\cdot$}\put(-10,22.5){$\cdot$}\put(-10,20){$\cdot$}\put(-10,17.5){$\cdot$}\put(-8.7,15){\vector(0,-1){5}}\put(-10,15){$\cdot$}}

{
\put(-80,150){\rotatebox{-60}{\vector(0,-1){4}}}
\put(-85,150){
\rotatebox{30}{\put(0,0){\vector(1,0){60}}}}
}

{
\put(-86.5,130){\vector(0,-1){75}}
\put(-179.6,258){
\rotatebox{180}{\put(-90,130){\vector(0,-1){4}}}}
}

\put(-8,7){
\rotatebox{5}{
\put(-90,55){
\rotatebox{60}{\put(0,0){\vector(1,0){135}}}}
\put(-157.3,113.5){
\rotatebox{180}{
\put(-90,55){
\rotatebox{60}{\put(0,0){\vector(1,0){4}}}}}}
}}

\put(76,-38)
{
\rotatebox{5}{
\put(-90,55){
\rotatebox{60}{\put(0,0){\vector(1,0){130}}}}
\put(-157.3,113.5){
\rotatebox{180}{
\put(-90,55){
\rotatebox{60}{\put(0,0){\vector(1,0){4}}}}}}
}
}

\put(90,-140)
{
\put(-80,150){\rotatebox{-60}{\vector(0,-1){4}}}
\put(-85,150){
\rotatebox{30}{\put(0,0){\vector(1,0){60}}}}
}

\put(160,0)
{
\put(-86.5,130){\vector(0,-1){75}}
\put(-179.6,258){
\rotatebox{180}{\put(-90,130){\vector(0,-1){4}}}}
}

\put(-75,10){
\rotatebox{60}{
\put(-10,70){$\cdot$}\put(-10,67.5){$\cdot$}\put(-10,65){$\cdot$}\put(-10,62.5){$\cdot$}\put(-10,60){$\cdot$}\put(-10,57.5){$\cdot$}\put(-10,55){$\cdot$}\put(-10,52.5){$\cdot$}\put(-10,50){$\cdot$}\put(-10,47.5){$\cdot$}\put(-10,45){$\cdot$}\put(-10,42.5){$\cdot$}\put(-10,40){$\cdot$}\put(-10,37.5){$\cdot$}\put(-10,35){$\cdot$}\put(-10,32.5){$\cdot$}\put(-10,30){$\cdot$}\put(-10,27.5){$\cdot$}\put(-10,25){$\cdot$}\put(-10,22.5){$\cdot$}\put(-10,20){$\cdot$}\put(-10,17.5){$\cdot$}\put(-8.7,15){\vector(0,-1){5}}\put(-10,15){$\cdot$}}
}

\put(95,140){

\put(-105,45){
\rotatebox{180}{
\put(-75,10){
\rotatebox{60}{
\put(-10,70){$\cdot$}\put(-10,67.5){$\cdot$}\put(-10,65){$\cdot$}\put(-10,62.5){$\cdot$}\put(-10,60){$\cdot$}\put(-10,57.5){$\cdot$}\put(-10,55){$\cdot$}\put(-10,52.5){$\cdot$}\put(-10,50){$\cdot$}\put(-10,47.5){$\cdot$}\put(-10,45){$\cdot$}\put(-10,42.5){$\cdot$}\put(-10,40){$\cdot$}\put(-10,37.5){$\cdot$}\put(-10,35){$\cdot$}\put(-10,32.5){$\cdot$}\put(-10,30){$\cdot$}\put(-10,27.5){$\cdot$}\put(-10,25){$\cdot$}\put(-10,22.5){$\cdot$}\put(-10,20){$\cdot$}\put(-10,17.5){$\cdot$}\put(-8.7,15){\vector(0,-1){5}}\put(-10,15){$\cdot$}}
}}}
}

\put(-70,3){
\rotatebox{25}{
\put(-10,140){$\cdot$}\put(-10,137.5){$\cdot$}
\put(-10,135){$\cdot$}\put(-10,132.5){$\cdot$}\put(-10,130){$\cdot$}\put(-10,127.5){$\cdot$}\put(-10,125){$\cdot$}\put(-10,122.5){$\cdot$}\put(-10,120){$\cdot$}\put(-10,117.5){$\cdot$}\put(-10,115){$\cdot$}\put(-10,112.5){$\cdot$}\put(-10,110){$\cdot$}\put(-10,107.5){$\cdot$}\put(-10,105){$\cdot$}\put(-10,102.5){$\cdot$}\put(-10,100){$\cdot$}\put(-10,97.5){$\cdot$}\put(-10,95){$\cdot$}\put(-10,92.5){$\cdot$}\put(-10,90){$\cdot$}\put(-10,87.5){$\cdot$}\put(-10,85){$\cdot$}\put(-10,82.5){$\cdot$}\put(-10,80){$\cdot$}\put(-10,77.5){$\cdot$}\put(-10,75){$\cdot$}\put(-10,72.5){$\cdot$}\put(-10,70){$\cdot$}\put(-10,67.5){$\cdot$}\put(-10,65){$\cdot$}\put(-10,62.5){$\cdot$}\put(-10,60){$\cdot$}\put(-10,57.5){$\cdot$}\put(-10,55){$\cdot$}\put(-10,52.5){$\cdot$}\put(-10,50){$\cdot$}\put(-10,47.5){$\cdot$}\put(-10,45){$\cdot$}\put(-10,42.5){$\cdot$}\put(-10,40){$\cdot$}\put(-10,37.5){$\cdot$}\put(-10,35){$\cdot$}\put(-10,32.5){$\cdot$}\put(-10,30){$\cdot$}\put(-10,27.5){$\cdot$}\put(-10,25){$\cdot$}\put(-10,22.5){$\cdot$}\put(-10,20){$\cdot$}\put(-10,17.5){$\cdot$}\put(-8.7,15){\vector(0,-1){5}}\put(-10,15){$\cdot$}}}

\put(-10,180){
\rotatebox{180}{
\rotatebox{25}{
\put(-10,135){$\cdot$}\put(-10,132.5){$\cdot$}\put(-10,130){$\cdot$}\put(-10,127.5){$\cdot$}\put(-10,125){$\cdot$}\put(-10,122.5){$\cdot$}\put(-10,120){$\cdot$}\put(-10,117.5){$\cdot$}\put(-10,115){$\cdot$}\put(-10,112.5){$\cdot$}\put(-10,110){$\cdot$}\put(-10,107.5){$\cdot$}\put(-10,105){$\cdot$}\put(-10,102.5){$\cdot$}\put(-10,100){$\cdot$}\put(-10,97.5){$\cdot$}\put(-10,95){$\cdot$}\put(-10,92.5){$\cdot$}\put(-10,90){$\cdot$}\put(-10,87.5){$\cdot$}\put(-10,85){$\cdot$}\put(-10,82.5){$\cdot$}\put(-10,80){$\cdot$}\put(-10,77.5){$\cdot$}\put(-10,75){$\cdot$}\put(-10,72.5){$\cdot$}\put(-10,70){$\cdot$}\put(-10,67.5){$\cdot$}\put(-10,65){$\cdot$}\put(-10,62.5){$\cdot$}\put(-10,60){$\cdot$}\put(-10,57.5){$\cdot$}\put(-10,55){$\cdot$}\put(-10,52.5){$\cdot$}\put(-10,50){$\cdot$}\put(-10,47.5){$\cdot$}\put(-10,45){$\cdot$}\put(-10,42.5){$\cdot$}\put(-10,40){$\cdot$}\put(-10,37.5){$\cdot$}\put(-10,35){$\cdot$}\put(-10,32.5){$\cdot$}\put(-10,30){$\cdot$}\put(-10,27.5){$\cdot$}\put(-10,25){$\cdot$}\put(-10,22.5){$\cdot$}\put(-10,20){$\cdot$}\put(-10,17.5){$\cdot$}\put(-8.7,15){\vector(0,-1){5}}\put(-10,15){$\cdot$}}}}
}

\put(-22,2){$w_1+w_2$}
\put(-22,209){$w_1+w_2$}
\put(72,46){$w_2$}
\put(72,166){$w_2$}
\put(-96,46){$w_2$}
\put(-97,166){$w_2$}

\put(-63,185){$w_2$}
\put(-103,108){$w_2$}
\put(-57,145){$w_2$}
\put(79,108){$w_2$}
\put(40,30){$w_2$}
\put(29,60){$w_2$}

\put(41,182){$w_1$}
\put(30,145){$w_1$}
\put(2,105){$w_1$}
\put(-50,60){$w_1$}
\put(-60,30){$w_1$}

}

}

\put(85,20){
\scalebox{0.9}{
\put(0,15){
\put(-20,180){$(\emptyset,112)$}
\put(58,135){$(1,12)$}
\put(58,45){$(11,2)$}
\put(-98,45){$(12,1)$}
\put(-98,135){$(2,11)$}
\put(-20,0){$(112,\emptyset)$}

\put(5,0){
\put(-8.7,171){\vector(0,1){4}}\put(-10,167.5){$\cdot$}\put(-10,165){$\cdot$}\put(-10,162.5){$\cdot$}\put(-10,160){$\cdot$}\put(-10,157.5){$\cdot$}\put(-10,155){$\cdot$}\put(-10,152.5){$\cdot$}\put(-10,150){$\cdot$}\put(-10,147.5){$\cdot$}\put(-10,145){$\cdot$}\put(-10,142.5){$\cdot$}\put(-10,140){$\cdot$}\put(-10,137.5){$\cdot$}
\put(-10,135){$\cdot$}\put(-10,132.5){$\cdot$}\put(-10,130){$\cdot$}\put(-10,127.5){$\cdot$}\put(-10,125){$\cdot$}\put(-10,122.5){$\cdot$}\put(-10,120){$\cdot$}\put(-10,117.5){$\cdot$}\put(-10,115){$\cdot$}\put(-10,112.5){$\cdot$}\put(-10,110){$\cdot$}\put(-10,107.5){$\cdot$}\put(-10,105){$\cdot$}\put(-10,102.5){$\cdot$}\put(-10,100){$\cdot$}\put(-10,97.5){$\cdot$}\put(-10,95){$\cdot$}\put(-10,92.5){$\cdot$}\put(-10,90){$\cdot$}\put(-10,87.5){$\cdot$}\put(-10,85){$\cdot$}\put(-10,82.5){$\cdot$}\put(-10,80){$\cdot$}\put(-10,77.5){$\cdot$}\put(-10,75){$\cdot$}\put(-10,72.5){$\cdot$}\put(-10,70){$\cdot$}\put(-10,67.5){$\cdot$}\put(-10,65){$\cdot$}\put(-10,62.5){$\cdot$}\put(-10,60){$\cdot$}\put(-10,57.5){$\cdot$}\put(-10,55){$\cdot$}\put(-10,52.5){$\cdot$}\put(-10,50){$\cdot$}\put(-10,47.5){$\cdot$}\put(-10,45){$\cdot$}\put(-10,42.5){$\cdot$}\put(-10,40){$\cdot$}\put(-10,37.5){$\cdot$}\put(-10,35){$\cdot$}\put(-10,32.5){$\cdot$}\put(-10,30){$\cdot$}\put(-10,27.5){$\cdot$}\put(-10,25){$\cdot$}\put(-10,22.5){$\cdot$}\put(-10,20){$\cdot$}\put(-10,17.5){$\cdot$}\put(-8.7,15){\vector(0,-1){5}}\put(-10,15){$\cdot$}}

{
\put(-80,150){\rotatebox{-60}{\vector(0,-1){4}}}
\put(-85,150){
\rotatebox{30}{\put(0,0){\vector(1,0){60}}}}
}

\put(-80,45){
\put(-10,82.5){$\cdot$}\put(-10,80){$\cdot$}\put(-10,77.5){$\cdot$}\put(-10,75){$\cdot$}\put(-10,72.5){$\cdot$}\put(-10,70){$\cdot$}\put(-10,67.5){$\cdot$}\put(-10,65){$\cdot$}\put(-10,62.5){$\cdot$}\put(-10,60){$\cdot$}\put(-10,57.5){$\cdot$}\put(-10,55){$\cdot$}\put(-10,52.5){$\cdot$}\put(-10,50){$\cdot$}\put(-10,47.5){$\cdot$}\put(-10,45){$\cdot$}\put(-10,42.5){$\cdot$}\put(-10,40){$\cdot$}\put(-10,37.5){$\cdot$}\put(-10,35){$\cdot$}\put(-10,32.5){$\cdot$}\put(-10,30){$\cdot$}\put(-10,27.5){$\cdot$}\put(-10,25){$\cdot$}\put(-10,22.5){$\cdot$}\put(-10,20){$\cdot$}\put(-10,17.5){$\cdot$}\put(-8.7,15){\vector(0,-1){5}}\put(-10,15){$\cdot$}}

\put(-90,185){
\rotatebox{150}{
\put(-8.5,140){\vector(0,1){4}}\put(-10,137.5){$\cdot$}
\put(-10,135){$\cdot$}\put(-10,132.5){$\cdot$}\put(-10,130){$\cdot$}\put(-10,127.5){$\cdot$}\put(-10,125){$\cdot$}\put(-10,122.5){$\cdot$}\put(-10,120){$\cdot$}\put(-10,117.5){$\cdot$}\put(-10,115){$\cdot$}\put(-10,112.5){$\cdot$}\put(-10,110){$\cdot$}\put(-10,107.5){$\cdot$}\put(-10,105){$\cdot$}\put(-10,102.5){$\cdot$}\put(-10,100){$\cdot$}\put(-10,97.5){$\cdot$}\put(-10,95){$\cdot$}\put(-10,92.5){$\cdot$}\put(-10,90){$\cdot$}\put(-10,87.5){$\cdot$}\put(-10,85){$\cdot$}\put(-10,82.5){$\cdot$}\put(-10,80){$\cdot$}\put(-10,77.5){$\cdot$}\put(-10,75){$\cdot$}\put(-10,72.5){$\cdot$}\put(-10,70){$\cdot$}\put(-10,67.5){$\cdot$}\put(-10,65){$\cdot$}\put(-10,62.5){$\cdot$}\put(-10,60){$\cdot$}\put(-10,57.5){$\cdot$}\put(-10,55){$\cdot$}\put(-10,52.5){$\cdot$}\put(-10,50){$\cdot$}\put(-10,47.5){$\cdot$}\put(-10,45){$\cdot$}\put(-10,42.5){$\cdot$}\put(-10,40){$\cdot$}\put(-10,37.5){$\cdot$}\put(-10,35){$\cdot$}\put(-10,32.5){$\cdot$}\put(-10,30){$\cdot$}\put(-10,27.5){$\cdot$}\put(-10,25){$\cdot$}\put(-10,22.5){$\cdot$}\put(-10,20){$\cdot$}\put(-10,17.5){$\cdot$}\put(-8.7,15){\vector(0,-1){5}}\put(-10,15){$\cdot$}}}

\put(-10,140){
\rotatebox{150}{
\put(-8.5,140){\vector(0,1){4}}\put(-10,137.5){$\cdot$}
\put(-10,135){$\cdot$}\put(-10,132.5){$\cdot$}\put(-10,130){$\cdot$}\put(-10,127.5){$\cdot$}\put(-10,125){$\cdot$}\put(-10,122.5){$\cdot$}\put(-10,120){$\cdot$}\put(-10,117.5){$\cdot$}\put(-10,115){$\cdot$}\put(-10,112.5){$\cdot$}\put(-10,110){$\cdot$}\put(-10,107.5){$\cdot$}\put(-10,105){$\cdot$}\put(-10,102.5){$\cdot$}\put(-10,100){$\cdot$}\put(-10,97.5){$\cdot$}\put(-10,95){$\cdot$}\put(-10,92.5){$\cdot$}\put(-10,90){$\cdot$}\put(-10,87.5){$\cdot$}\put(-10,85){$\cdot$}\put(-10,82.5){$\cdot$}\put(-10,80){$\cdot$}\put(-10,77.5){$\cdot$}\put(-10,75){$\cdot$}\put(-10,72.5){$\cdot$}\put(-10,70){$\cdot$}\put(-10,67.5){$\cdot$}\put(-10,65){$\cdot$}\put(-10,62.5){$\cdot$}\put(-10,60){$\cdot$}\put(-10,57.5){$\cdot$}\put(-10,55){$\cdot$}\put(-10,52.5){$\cdot$}\put(-10,50){$\cdot$}\put(-10,47.5){$\cdot$}\put(-10,45){$\cdot$}\put(-10,42.5){$\cdot$}\put(-10,40){$\cdot$}\put(-10,37.5){$\cdot$}\put(-10,35){$\cdot$}\put(-10,32.5){$\cdot$}\put(-10,30){$\cdot$}\put(-10,27.5){$\cdot$}\put(-10,25){$\cdot$}\put(-10,22.5){$\cdot$}\put(-10,20){$\cdot$}\put(-10,17.5){$\cdot$}\put(-8.7,15){\vector(0,-1){5}}\put(-10,15){$\cdot$}}}

\put(8,-95)
{
\put(-80,150){\rotatebox{-60}{\vector(0,-1){4}}}
\put(-85,150){
\rotatebox{30}{\put(0,0){\vector(1,0){150}}}}
}

\put(90,-140)
{
\put(-80,150){\rotatebox{-60}{\vector(0,-1){4}}}
\put(-85,150){
\rotatebox{30}{\put(0,0){\vector(1,0){60}}}}
}

\put(-10,185){
\rotatebox{180}{
\put(-75,45){
\put(-10,82.5){$\cdot$}\put(-10,80){$\cdot$}\put(-10,77.5){$\cdot$}\put(-10,75){$\cdot$}\put(-10,72.5){$\cdot$}\put(-10,70){$\cdot$}\put(-10,67.5){$\cdot$}\put(-10,65){$\cdot$}\put(-10,62.5){$\cdot$}\put(-10,60){$\cdot$}\put(-10,57.5){$\cdot$}\put(-10,55){$\cdot$}\put(-10,52.5){$\cdot$}\put(-10,50){$\cdot$}\put(-10,47.5){$\cdot$}\put(-10,45){$\cdot$}\put(-10,42.5){$\cdot$}\put(-10,40){$\cdot$}\put(-10,37.5){$\cdot$}\put(-10,35){$\cdot$}\put(-10,32.5){$\cdot$}\put(-10,30){$\cdot$}\put(-10,27.5){$\cdot$}\put(-10,25){$\cdot$}\put(-10,22.5){$\cdot$}\put(-10,20){$\cdot$}\put(-10,17.5){$\cdot$}\put(-8.7,15){\vector(0,-1){5}}\put(-10,15){$\cdot$}}}}

\put(-75,10){
\rotatebox{60}{
\put(-10,70){$\cdot$}\put(-10,67.5){$\cdot$}\put(-10,65){$\cdot$}\put(-10,62.5){$\cdot$}\put(-10,60){$\cdot$}\put(-10,57.5){$\cdot$}\put(-10,55){$\cdot$}\put(-10,52.5){$\cdot$}\put(-10,50){$\cdot$}\put(-10,47.5){$\cdot$}\put(-10,45){$\cdot$}\put(-10,42.5){$\cdot$}\put(-10,40){$\cdot$}\put(-10,37.5){$\cdot$}\put(-10,35){$\cdot$}\put(-10,32.5){$\cdot$}\put(-10,30){$\cdot$}\put(-10,27.5){$\cdot$}\put(-10,25){$\cdot$}\put(-10,22.5){$\cdot$}\put(-10,20){$\cdot$}\put(-10,17.5){$\cdot$}\put(-8.7,15){\vector(0,-1){5}}\put(-10,15){$\cdot$}}
}

\put(95,140){

\put(-105,45){
\rotatebox{180}{
\put(-75,10){
\rotatebox{60}{
\put(-10,70){$\cdot$}\put(-10,67.5){$\cdot$}\put(-10,65){$\cdot$}\put(-10,62.5){$\cdot$}\put(-10,60){$\cdot$}\put(-10,57.5){$\cdot$}\put(-10,55){$\cdot$}\put(-10,52.5){$\cdot$}\put(-10,50){$\cdot$}\put(-10,47.5){$\cdot$}\put(-10,45){$\cdot$}\put(-10,42.5){$\cdot$}\put(-10,40){$\cdot$}\put(-10,37.5){$\cdot$}\put(-10,35){$\cdot$}\put(-10,32.5){$\cdot$}\put(-10,30){$\cdot$}\put(-10,27.5){$\cdot$}\put(-10,25){$\cdot$}\put(-10,22.5){$\cdot$}\put(-10,20){$\cdot$}\put(-10,17.5){$\cdot$}\put(-8.7,15){\vector(0,-1){5}}\put(-10,15){$\cdot$}}
}}}
}

\put(-70,3){
\rotatebox{25}{
\put(-10,140){$\cdot$}\put(-10,137.5){$\cdot$}
\put(-10,135){$\cdot$}\put(-10,132.5){$\cdot$}\put(-10,130){$\cdot$}\put(-10,127.5){$\cdot$}\put(-10,125){$\cdot$}\put(-10,122.5){$\cdot$}\put(-10,120){$\cdot$}\put(-10,117.5){$\cdot$}\put(-10,115){$\cdot$}\put(-10,112.5){$\cdot$}\put(-10,110){$\cdot$}\put(-10,107.5){$\cdot$}\put(-10,105){$\cdot$}\put(-10,102.5){$\cdot$}\put(-10,100){$\cdot$}\put(-10,97.5){$\cdot$}\put(-10,95){$\cdot$}\put(-10,92.5){$\cdot$}\put(-10,90){$\cdot$}\put(-10,87.5){$\cdot$}\put(-10,85){$\cdot$}\put(-10,82.5){$\cdot$}\put(-10,80){$\cdot$}\put(-10,77.5){$\cdot$}\put(-10,75){$\cdot$}\put(-10,72.5){$\cdot$}\put(-10,70){$\cdot$}\put(-10,67.5){$\cdot$}\put(-10,65){$\cdot$}\put(-10,62.5){$\cdot$}\put(-10,60){$\cdot$}\put(-10,57.5){$\cdot$}\put(-10,55){$\cdot$}\put(-10,52.5){$\cdot$}\put(-10,50){$\cdot$}\put(-10,47.5){$\cdot$}\put(-10,45){$\cdot$}\put(-10,42.5){$\cdot$}\put(-10,40){$\cdot$}\put(-10,37.5){$\cdot$}\put(-10,35){$\cdot$}\put(-10,32.5){$\cdot$}\put(-10,30){$\cdot$}\put(-10,27.5){$\cdot$}\put(-10,25){$\cdot$}\put(-10,22.5){$\cdot$}\put(-10,20){$\cdot$}\put(-10,17.5){$\cdot$}\put(-8.7,15){\vector(0,-1){5}}\put(-10,15){$\cdot$}}}

\put(-10,180){
\rotatebox{180}{
\rotatebox{25}{
\put(-10,135){$\cdot$}\put(-10,132.5){$\cdot$}\put(-10,130){$\cdot$}\put(-10,127.5){$\cdot$}\put(-10,125){$\cdot$}\put(-10,122.5){$\cdot$}\put(-10,120){$\cdot$}\put(-10,117.5){$\cdot$}\put(-10,115){$\cdot$}\put(-10,112.5){$\cdot$}\put(-10,110){$\cdot$}\put(-10,107.5){$\cdot$}\put(-10,105){$\cdot$}\put(-10,102.5){$\cdot$}\put(-10,100){$\cdot$}\put(-10,97.5){$\cdot$}\put(-10,95){$\cdot$}\put(-10,92.5){$\cdot$}\put(-10,90){$\cdot$}\put(-10,87.5){$\cdot$}\put(-10,85){$\cdot$}\put(-10,82.5){$\cdot$}\put(-10,80){$\cdot$}\put(-10,77.5){$\cdot$}\put(-10,75){$\cdot$}\put(-10,72.5){$\cdot$}\put(-10,70){$\cdot$}\put(-10,67.5){$\cdot$}\put(-10,65){$\cdot$}\put(-10,62.5){$\cdot$}\put(-10,60){$\cdot$}\put(-10,57.5){$\cdot$}\put(-10,55){$\cdot$}\put(-10,52.5){$\cdot$}\put(-10,50){$\cdot$}\put(-10,47.5){$\cdot$}\put(-10,45){$\cdot$}\put(-10,42.5){$\cdot$}\put(-10,40){$\cdot$}\put(-10,37.5){$\cdot$}\put(-10,35){$\cdot$}\put(-10,32.5){$\cdot$}\put(-10,30){$\cdot$}\put(-10,27.5){$\cdot$}\put(-10,25){$\cdot$}\put(-10,22.5){$\cdot$}\put(-10,20){$\cdot$}\put(-10,17.5){$\cdot$}\put(-8.7,15){\vector(0,-1){5}}\put(-10,15){$\cdot$}}}}
}

\put(-22,2){$2w_1+w_2$}
\put(-22,209){$2w_1+w_2$}
\put(72,46){$w_2$}
\put(71,166){$w_1+w_2$}
\put(-110,44){$w_1+w_2$}
\put(-97,166){$w_2$}

\put(-65,185){$w_2$}
\put(-105,108){$w_1$}
\put(-52,145){$w_1$}
\put(82,108){$w_1$}
\put(40,30){$w_2$}
\put(29,60){$w_1$}

\put(43,180){$w_1$}
\put(30,145){$w_1$}
\put(0,85){$w_1$}
\put(-50,60){$w_1$}
\put(-60,30){$w_1$}
\put(-30,104){$w_2$}
}
}
\end{picture}
\end{equation*}
Dashed and solid arrows denote the transitions with rate $w_1$  and $w_2$,
respectively.
One can check the steady state condition directly from these diagrams.
 
\vspace{0.2cm}
For the 3-iTAZRP one has
\begin{align*}
|\xi_2(1,1,1)\rangle&=
w_2 (w_1 + w_2 + w_3) |1, 23\rangle 
+w_2 w_3 |2, 13\rangle 
+ (w_1 + w_2) w_3 |3, 12\rangle \\
&+ (w_1 + w_2) (w_1 + w_2 + w_3) |\emptyset, 123\rangle,\\
|\xi_3(1,1,1)\rangle&=
w_2^2 w_3^2 |1, 2, 3\rangle 
+ w_2 w_3 (w_1^2 + w_1 w_2 + w_2^2 + w_1 w_3 + w_2 w_3)|1, 3, 2\rangle \\
&+w_2^2 (w_1^2 + w_1 w_2 + w_2^2 + w_1 w_3 + w_2 w_3 + w_3^2) |\emptyset, 1, 23\rangle \\
&+ w_2 w_3 (w_1 w_2 + w_1 w_3 + w_2 w_3) |\emptyset, 2, 13\rangle 
+ (w_1^2 + w_1 w_2 + w_2^2) w_3 (w_1 + w_2 + w_3) |\emptyset, 3, 12\rangle \\
&+ (w_1^2 + w_1 w_2 + w_2^2) w_3^2 |\emptyset, 12, 3\rangle
+ w_2^2 w_3 (w_1 + w_2 + w_3) |\emptyset, 13, 2\rangle\\ 
&+ w_2 (w_1 + w_2) (w_1^2 + w_1 w_2 + w_2^2 + w_1 w_3 + w_2 w_3 + w_3^2) |\emptyset, 23, 1\rangle \\
&+ (w_1^2 + w_1 w_2 + w_2^2) (w_1^2 + w_1 w_2 + w_2^2 + 
   w_1 w_3 + w_2 w_3 + w_3^2) |\emptyset, \emptyset, 123\rangle,\\
|\xi_2(2,1,1)\rangle&=
(w_1 + w_2) (2 w_1 + w_2 + w_3) |1, 123\rangle 
+ w_2 w_3|2, 113\rangle 
+ (2 w_1 + w_2) w_3 |3, 112\rangle \\
&+ w_2 (2 w_1 + w_2 + w_3) |11, 23\rangle + 
 (w_1 + w_2) w_3|12, 13\rangle \\
& + (2 w_1 + w_2) (2 w_1 + w_2 + w_3)|\emptyset, 1123\rangle,\\
|\xi_2(1,2,1)\rangle&=
w_2 (w_1 + 2 w_2 + w_3) |1, 223\rangle 
+ w_2 (w_2 + w_3) |2, 123\rangle 
+ (w_1 + w_2) w_3 |3, 122\rangle \\
&+ w_2 (w_1 + w_2 + w_3) |12, 23\rangle + 
 w_2 w_3|13, 22\rangle 
 + (w_1 + w_2) (w_1 + 2 w_2 + w_3) |\emptyset, 1223\rangle,\\
|\xi_2(1,1,2)\rangle&=
w_2 (w_1 + w_2 + w_3) |1, 233\rangle 
+ w_2 w_3|2, 133\rangle 
+ (w_1 + w_2) w_3 |3, 123\rangle \\
&+ (w_1 + w_2) w_3 |12, 33\rangle + 
w_2 w_3 |13, 23\rangle + (w_1 + w_2) (w_1 + w_2 + w_3) |\emptyset, 1233\rangle.
\end{align*}

For the homogeneous case where all the $w_i$'s are equal,
the states having the largest probability are 
$C^i|\emptyset, \ldots \emptyset, \text{all}\rangle\,(i \in \Z_L)$,
which is a symptom of {\em condensation} \cite{EH,GSS}.
See \cite[eq.(4.11)]{KMO3}.
\end{example}

\section{Matrix product formula}\label{sec:mpf}
Let $F = \bigoplus_{m\ge 0} \C |m\rangle$ and 
$F^\ast = \bigoplus_{m\ge 0} \C \langle m |$ be the Fock space and its dual
with the bilinear pairing $\langle m | m' \rangle = \delta^{m'}_{m}$.
Let 
${\bf 1}, {\bf a}^{\pm}, {\bf k}$ and 
${\bf d}={\bf 1}-{\bf k}$ be the linear operators acting on them by
\begin{align*}
&{\bf 1}|m\rangle = |m\rangle,\quad 
{\bf a}^+|m\rangle = |m+1\rangle, \quad
{\bf a}^-|m\rangle = |m-1\rangle,\quad
{\bf k}|m\rangle = \delta_m^0|m\rangle,\\
&\langle m |{\bf 1}= \langle m |,\quad 
\langle m |{\bf a}^+ = \langle m-1 |, \quad
\langle m |{\bf a}^- = \langle m +1|,\quad
\langle m |{\bf k} = \delta_m^0\langle m |
\end{align*}
with $|\!-\!1\rangle = 0$ and 
$\langle\!-1| = 0$.
They satisfy 
$(\langle m | X)|m'\rangle = \langle m| (X|m'\rangle)$.
We denote the trace over $F$ by 
$\mathrm{Tr}(X)=\sum_{m \ge 0}\langle m| X | m\rangle$.
The trace over $F^{\otimes N}$ is the product of the one on 
each component.
In our working below, its convergence will always be assured by
the fact that the relevant $X$ contains ${\bf k}^{\otimes N}$ as an overall factor. 

Let $\mu=(\mu^1,\ldots, \mu^{n-1}) \in (\Z_{\ge 0})^{n-1}$ and 
$\alpha=(\alpha^1,\ldots, \alpha^n) \in (\Z_{\ge 0})^{n}$ be local states of 
$(n-1)$-iTAZRP and $n$-iTAZRP in multiplicity representation, respectively.
We define the operators 
$A_{\mu,\alpha}=A^{(n)}_{\mu,\alpha}, 
\hat{A}_{\mu,\alpha}=\hat{A}^{(n)}_{\mu,\alpha}
\in \mathrm{End}(F^{\otimes n-1})$ 
for $n \ge 2$ by
\begin{align}
A_{\mu,\alpha} &= P_+(\mu) 
\Bigl(\;\sum_{r=1}^{n-1}
\delta_{\alpha^{r+1}, \ldots,  \alpha^n}^{\; \,0,\, . . . \ldots, \, 0}
w_r K_r+ w_nK_n\, \Bigr)P_-(\overline{\alpha}),
\label{adef}\\
\hat{A}_{\mu,\alpha} &= P_+(\mu) 
\Bigl(\;\sum_{r=1}^{n-1}
\delta_{\alpha^{r+1}, \ldots,  \alpha^n}^{\; \,0,\, . . . \ldots, \, 0}
w_r(w_r+g(\alpha)) K_r+ w_n(w_n+g(\alpha))K_n\, \Bigr)P_-(\overline{\alpha}),
\label{ahdef}\\
P_\pm(\gamma) &= 
({\bf a}^\pm)^{\gamma^1} \otimes \cdots \otimes ({\bf a}^\pm)^{\gamma^{n-1}}
\;\;\text{for}\; \;\gamma=(\gamma^1,\ldots,\gamma^{n-1}) \in 
(\Z_{\ge 0})^{n-1},
\nonumber\\
K_r &=\overbrace{{\bf k} \otimes \cdots \otimes {\bf k}}^{r-1}\otimes 
{\bf d} \otimes 
\overbrace{{\bf 1} \otimes \cdots \otimes {\bf 1}}^{n-1-r}
\quad (1 \le r <n),\quad
K_n = {\bf k}^{\otimes n-1},
\nonumber
\end{align}
where $\overline{\alpha}=(\alpha^1,\ldots, \alpha^{n-1})$
and $g(\alpha)$ is defined in (\ref{gdef}).
We will exhibit the $n$-dependence as 
$ A^{(n)}_{\mu,\alpha}$ when preferable.
Note that $P_\pm(\gamma)$ and $K_r$ are also 
dependent on $n$ although it is not exhibited.
For $n=2, 3$  (\ref{adef}) looks as 
\begin{align*}
&A_{(\mu^1),(\alpha^1,\alpha^2)} = {\bf a}^{\mu^1}
(\delta^0_{\alpha^2}w_1{\bf d} + w_2{\bf k}){\bf a}^{-\alpha^1},\\
&A_{(\mu^1,\mu^2),(\alpha^1,\alpha^2,\alpha^3)} 
= ({\bf a}^{\mu^1}\otimes {\bf a}^{\mu^2})
(\delta^{0,\;\;0}_{\alpha^2,\alpha^3}w_1 {\bf d} \otimes {\bf 1} 
+ \delta^0_{\alpha^3}w_2 {\bf k}\otimes{\bf d}
+ w_3 {\bf k} \otimes {\bf k})
({\bf a}^{-\alpha^1}\otimes {\bf a}^{-\alpha^2})\\
&=\delta^{0,\;\;0}_{\alpha^2,\alpha^3}
w_1{\bf a}^{\mu^1}{\bf d}\,{\bf a}^{-\alpha^1}
\otimes {\bf a}^{\mu^2}
+\delta_{\alpha^3}^0w_2{\bf a}^{\mu^1}{\bf k}\,{\bf a}^{-\alpha^1}
\otimes{\bf a}^{\mu^2}{\bf d}\,{\bf a}^{-\alpha^2}
+w_3{\bf a}^{\mu^1}{\bf k}\,{\bf a}^{-\alpha^1}
\otimes{\bf a}^{\mu^2}{\bf k}\,{\bf a}^{-\alpha^2},
\end{align*}
where we have used the simplified notation
${\bf a}^{\pm m} = ({\bf a}^\pm)^m$ for $m \ge 0$.
In the homogeneous case, (\ref{adef}) simplifies to 
\begin{align*}
A_{\mu,\alpha} |_{w_1=\cdots = w_n=1} = P_+(\mu)(\,
{\bf k}^{\alpha^2+\cdots + \alpha^n} \!\otimes \cdots 
\otimes {\bf k}^{\alpha^{n-1}+\alpha^n} \!
\otimes {\bf k}^{\alpha^n})P_-(\overline{\alpha})
\end{align*}
due to the identity $\delta_i^0{\bf d} + {\bf k} = {\bf k}^i$. 

Given an $n$-iTAZRP local state 
$\sigma \in (\Z_{\ge 0})^n$ in multiplicity representation for $n \ge 2$, we define
\begin{align}\label{X}
X_\sigma = X^{(n)}_\sigma = \sum
A^{(2)}_{\mu^{(1)}, \mu^{(2)}} \otimes 
A^{(3)}_{\mu^{(2)}, \mu^{(3)}} \otimes \cdots \otimes
A^{(n)}_{\mu^{(n-1)}, \sigma} 
\in \mathrm{End}(F^{\otimes n(n-1)/2}),
\end{align}
where the sums range over 
$\mu^{(a)} \in (\Z_{\ge 0})^{a}$ for all $a \in [1,n-1]$. 
By the definition it satisfies the recursion relation
\begin{align}\label{xrec}
X^{(n)}_\sigma = \sum_{\mu \in (\Z_{\ge 0})^{n-1}}
X^{(n-1)}_\mu \otimes A^{(n)}_{\mu,\sigma},
\end{align}
where the $n=2$ case should be understood as
$X^{(2)}_\sigma=\sum_{\mu
\in \Z_{\ge 0}}A^{(2)}_{\mu,\sigma}$.
Now we state our first main result.
\begin{theorem}\label{th:P}
The steady state probability of the $n$-$\mathrm{iTAZRP}$ in basic sectors 
is given in the matrix product form
\begin{align*}
{\mathbb P}(\sigma_1, \ldots, \sigma_L) = (w_2\cdots w_n)^{-1}
\mathrm{Tr}\bigl(X_{\sigma_1} \cdots X_{\sigma_L}\bigr),
\end{align*}
where the trace is taken over $F^{\otimes n(n-1)/2}$.
\end{theorem}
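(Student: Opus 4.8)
The plan is to show that the matrix product vector
\[
|\mathscr{P}\rangle = \sum_{\boldsymbol{\sigma} \in S({\bf m})}
(w_2\cdots w_n)^{-1}\mathrm{Tr}\bigl(X_{\sigma_1}\cdots X_{\sigma_L}\bigr)\,|\boldsymbol{\sigma}\rangle
\]
lies in the kernel of $H_{\mathrm{iTAZRP}}$, i.e.\ that it satisfies the scalar equations (\ref{htn}) for every $\boldsymbol{\sigma}\in S({\bf m})$. The overall prefactor $(w_2\cdots w_n)^{-1}$ is a single nonzero constant, so it is irrelevant for the null-vector property and serves only to fix the normalization convention. Since (\ref{htn}) is a sum over bonds $i\in\Z_L$, the first step is to isolate, for a fixed output pair $(\gamma,\delta)$, the two-site quantity
\[
\Phi_{\gamma,\delta} := \sum_{\alpha,\beta} h^{\gamma,\delta}_{\alpha,\beta}\, X_\alpha \otimes X_\beta ,
\]
which by (\ref{hdef}) consists of the single diagonal contribution $-g(\delta)\,X_\gamma\otimes X_\delta$ together with the hopping contributions $w_{\mathrm{min}(\gamma\setminus\alpha)}\,X_\alpha\otimes X_\beta$ indexed by the admissible splittings $(\alpha,\beta)>(\gamma,\delta)$ of a sorted tail of $\beta$.

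The heart of the argument is the generalized hat relation (Proposition \ref{pr:hat}), which I would formulate and prove as a divergence identity
\[
\Phi_{\gamma,\delta} = \hat{X}_\gamma \otimes X_\delta - X_\gamma \otimes \hat{X}_\delta ,
\]
where $\hat{X}_\sigma$ is assembled from the hatted building blocks $\hat{A}_{\mu,\alpha}$ of (\ref{ahdef}) exactly as $X_\sigma$ is assembled from the $A_{\mu,\alpha}$ via the recursion (\ref{xrec}). The extra weights $w_r(w_r+g(\alpha))$ in (\ref{ahdef}) are precisely what is needed so that the diagonal weight $g(\delta)$ of (\ref{gdef}) is reproduced when one forms the difference on the right, while the off-diagonal hopping weights $w_{\mathrm{min}(\gamma\setminus\alpha)}$ emerge from the mismatched $\hat A$/$A$ terms.

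Granting this, I would insert $\Phi_{\sigma_i,\sigma_{i+1}}$ at the $i$-th bond of $\mathrm{Tr}(X_{\sigma_1}\cdots X_{\sigma_L})$ and sum over $i\in\Z_L$. Writing $T^+_i$ for the trace with $\hat{X}$ inserted at position $i$ (and ordinary $X$ elsewhere) and $T^-_i$ for the trace with $\hat{X}$ at position $i+1$, cyclicity of the trace gives $T^-_i = T^+_{i+1}$, so that $\sum_{i}(T^+_i-T^-_i)=\sum_i(T^+_i-T^+_{i+1})=0$ telescopes on the periodic ring. This is exactly (\ref{htn}) up to the overall constant, establishing $H_{\mathrm{iTAZRP}}|\mathscr{P}\rangle=0$. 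Convergence of every trace is guaranteed, as noted after (\ref{X}), by the ${\bf k}^{\otimes N}$ factor carried by the products, which collapses each trace to a finite sum. The stated homogeneity of degree $(n-1)(L-1)$ and the value at $w_1=\cdots=w_n=1$ are not required for kernel membership; they are a separate bookkeeping matter to be reconciled with Theorem \ref{th:we} and Lemma \ref{le:iii}.

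The main obstacle is the hat relation itself. I expect to prove it by induction on $n$, using the recursion (\ref{xrec}) to reduce the level-$n$ identity to the level-$(n-1)$ identity together with an operator identity among $A^{(n)}_{\mu,\alpha}$ and $\hat{A}^{(n)}_{\mu,\alpha}$ on $F^{\otimes n-1}$. The delicate point is the combinatorics of the hopping sum: for fixed $(\gamma,\delta)$ one must match, term by term in the Fock operators ${\bf a}^{\pm},{\bf k},{\bf d}$, the weight $w_{\mathrm{min}(\gamma\setminus\alpha)}$ from each admissible splitting against the telescoping difference of hatted operators, and then check that the projectors $K_r={\bf k}^{\otimes(r-1)}\otimes{\bf d}\otimes{\bf 1}^{\otimes(n-1-r)}$ correctly encode the constraint $\delta^{0,\ldots,0}_{\alpha^{r+1},\ldots,\alpha^n}$ selecting the species of the minimal moved particle. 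This bookkeeping — equivalently, verifying that the relation has the same shape as in \cite{AM} — is where essentially all of the computation resides.
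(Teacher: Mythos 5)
Your overall architecture (a two-site operator identity plus telescoping under the cyclic trace) is the same as the paper's, but the key identity you propose is false as stated, and what it omits is exactly the term that makes the proof work. The paper's generalized hat relation (\ref{ghat}) is \emph{not} a pure divergence: it reads
\[
\sum_{\gamma,\delta}h^{\alpha,\beta}_{\gamma,\delta}A_{\mu,\gamma}A_{\nu,\delta}
-\sum_{\kappa,\lambda}\overline{h}^{\,\kappa,\lambda}_{\mu,\nu}A_{\kappa,\alpha}A_{\lambda,\beta}
=\hat{A}_{\mu,\alpha}A_{\nu,\beta}-A_{\mu,\alpha}\hat{A}_{\nu,\beta},
\]
with an extra term involving the $(n-1)$-iTAZRP Markov matrix $\overline{h}$ acting on the auxiliary labels. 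Consequently your claimed identity $\Phi_{\alpha,\beta}=\hat{X}_\alpha\otimes X_\beta-X_\alpha\otimes\hat{X}_\beta$, with $\hat{X}$ ``assembled from the $\hat{A}$'s exactly as $X$ is assembled from the $A$'s via (\ref{xrec})'', fails. This can be seen already at $n=2$, where $X_\alpha=\sum_\mu A_{\mu,\alpha}$ and your $\hat{X}_\alpha=\sum_\mu\hat{A}_{\mu,\alpha}$: summing (\ref{ghat}) over $\mu,\nu\in\Z_{\ge 0}$ and using the $1$-TAZRP column-sum formula $\sum_{\mu,\nu}\overline{h}^{\,\kappa,\lambda}_{\mu,\nu}=(\kappa-\lambda)w_1$ gives
\[
\Phi_{\alpha,\beta}
=\hat{X}_\alpha X_\beta-X_\alpha\hat{X}_\beta
+w_1\sum_{\kappa,\lambda\ge 0}(\kappa-\lambda)A_{\kappa,\alpha}A_{\lambda,\beta},
\]
and the residue is a nonzero operator (e.g.\ for $\alpha=\beta=(0,0)$ its matrix element $\langle N|\cdot|0\rangle$ equals $Nw_2(w_2-w_1)$). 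So the ``operator identity among $A^{(n)}$ and $\hat{A}^{(n)}$'' that your induction calls for cannot be a pure divergence with the $\hat{A}$ of (\ref{ahdef}); the $\overline{h}$ correction is indispensable.

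There are two ways to set this right, and they distinguish your intended route from the paper's. The paper keeps the $\overline{h}$ term: inserting (\ref{xrec}) at each bond and applying (\ref{ghat}), the hatted terms cancel under $\sum_{i\in\Z_L}$ by cyclicity of the trace, and the surviving $\overline{h}$ term is precisely the steady-state condition for the $(n-1)$-iTAZRP matrix product applied to the labels $\mu_i$; the induction is therefore on the null-vector property itself, terminating at the trivial $1$-TAZRP identity $\sum_{i\in\Z_L}(\mu_i-\mu_{i+1})w_1=0$. Alternatively, your pure-divergence formulation at the $X$ level can be saved, but only by defining $\hat{X}$ through a Leibniz rule over the layers,
\[
\hat{X}^{(n)}_\sigma=\sum_{\mu}\Bigl(\hat{X}^{(n-1)}_\mu\otimes A^{(n)}_{\mu,\sigma}
+X^{(n-1)}_\mu\otimes\hat{A}^{(n)}_{\mu,\sigma}\Bigr),
\qquad
\hat{X}^{(1)}_\kappa=w_1\kappa\ \ (\text{scalar}),
\]
i.e.\ exactly one layer carries a hat, summed over which layer; then the level-$(n-1)$ divergence identity absorbs the $\overline{h}$ term produced by (\ref{ghat}) and the induction closes. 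With either repair the telescoping step and the reduction to (\ref{htn}) go through as you describe, and the normalization issues are indeed separate (Lemma \ref{le:iii}). As written, however, your central lemma is not correct, and the gap sits precisely at the point where all of the structure of this model (the nesting of the $n$- and $(n-1)$-species processes) enters.
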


Before the proof we check basic properties of the formula.
\begin{lemma}\label{le:iii} 
The quantity
$(w_2\cdots w_n)^{-1}
\mathrm{Tr}\bigl(X_{\sigma_1} \cdots X_{\sigma_L}\bigr)$ 
is finite and is a 
homogeneous polynomial in $w_1,\ldots, w_n$ of degree $(n-1)(L-1)$
with coefficients from $\Z_{\ge 0}$.
\end{lemma}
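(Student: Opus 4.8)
The plan is to induct on $n\ge 2$ using the recursion (\ref{xrec}), which upon tracing factorizes the level-$n$ quantity into the level-$(n-1)$ one times a trace of the top operators $A^{(n)}$. First I would record three elementary facts about the operators (\ref{adef}). \emph{(1) Degree.} Writing $A^{(a)}_{\mu,\alpha}=P_+(\mu)\,D\,P_-(\overline{\alpha})$ with $D$ the diagonal middle factor $\sum_{r<a}\delta(\cdots)w_rK_r+w_aK_a$, the $P_\pm$ are $0$–$1$ shift operators while $D$ is diagonal and acts on each basis vector of $F^{\otimes a-1}$ by multiplication by $0$ or by a single $w_r$ (the projectors $K_1,\dots,K_a$ being mutually exclusive, selected by the position of the first nonzero occupation). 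Hence every matrix element $\langle c'|A^{(a)}_{\mu,\alpha}|c\rangle$ is either $0$ or $w_r$ (coefficient $1$), so every nonzero matrix element is a homogeneous degree-$1$ monomial with coefficient in $\Z_{\ge0}$. \emph{(2) Conservation.} Since $D$ preserves occupation numbers while $P_+(\mu),P_-(\overline\alpha)$ shift component $b$ by $\mu^b,\alpha^b$, the cyclic trace $T((\mu_j)):=\mathrm{Tr}_{F^{\otimes n-1}}(\prod_jA^{(n)}_{\mu_j,\sigma_j})$ vanishes unless the loop closes, i.e. $\sum_j\mu_j^b=\sum_j\sigma_j^b=m_b$ for every $b\in[1,n-1]$. \emph{(3) Forced projector.} As the sector is basic, $m_n\ge1$, so some site $j_0$ has $\sigma_{j_0}^n\ge1$; then $\theta(\sigma_{j_0}^n=0)$ kills every delta in front of $K_1,\dots,K_{n-1}$, leaving only $w_nK_n=w_n{\bf k}^{\otimes n-1}$, whence $A^{(n)}_{\mu_{j_0},\sigma_{j_0}}=w_n\,P_+(\mu_{j_0})\,{\bf k}^{\otimes n-1}\,P_-(\overline{\sigma_{j_0}})$.

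Next I would settle finiteness. By (2) the loop-closure constraint forces $\mu_j^b\le m_b$, so only finitely many tuples $(\mu_1,\dots,\mu_L)$ contribute; each such tuple satisfies $\sum_j\mu_j^b=m_b\ge1$, hence lies in a basic sector of the $(n-1)$-iTAZRP, where the inductive hypothesis applies. For the top factor, the projector ${\bf k}^{\otimes n-1}$ at site $j_0$ from (3) pins the Fock state to $|0\rangle^{\otimes n-1}$ at the cut, so the cyclic trace $T((\mu_j))$ collapses to a single-state evaluation and is finite. Having shown that all summations (the internal $\mu$-sums and the Fock traces) reduce to finitely many nonzero terms, the factorization
\begin{align*}
\mathrm{Tr}_{F^{\otimes n(n-1)/2}}\Bigl(\textstyle\prod_jX^{(n)}_{\sigma_j}\Bigr)
=\sum_{(\mu_j)}\mathrm{Tr}_{F^{\otimes(n-1)(n-2)/2}}\Bigl(\textstyle\prod_jX^{(n-1)}_{\mu_j}\Bigr)\,T((\mu_j))
\end{align*}
is legitimate and finite, which establishes convergence.

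Finally I would read off the degree, divisibility, and positivity. By (1), each nonzero contribution to $T((\mu_j))$ is a product of $L$ degree-$1$ matrix elements, so $T$ is homogeneous of degree $L$ with coefficients in $\Z_{\ge0}$, and by (3) it carries an overall factor $w_n$, i.e. $w_n\mid T$. By induction the level-$(n-1)$ trace is homogeneous of degree $(n-2)L$, divisible by $w_2\cdots w_{n-1}$, with nonnegative integer coefficients. The displayed factorization then makes $\mathrm{Tr}(X_{\sigma_1}\cdots X_{\sigma_L})$ homogeneous of degree $(n-1)L$, divisible by $w_2\cdots w_n$, with coefficients in $\Z_{\ge0}$. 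Dividing by the squarefree monomial $w_2\cdots w_n$ lowers the degree to $(n-1)(L-1)$ and preserves nonnegative integrality, since the coefficient of a monomial $\nu$ in the quotient equals the coefficient of $(w_2\cdots w_n)\nu$ in the numerator, which is $\ge0$. The base case $n=2$ is the same argument applied to $X^{(2)}_\sigma=\sum_\mu A^{(2)}_{\mu,\sigma}$: the forced ${\bf k}$ coming from $m_2\ge1$ simultaneously yields finiteness, the factor $w_2$, and degree $L$.

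The main obstacle is the finiteness/convergence step: one must verify that the a priori infinite internal sums defining each $X_\sigma$, together with the infinite Fock traces, genuinely collapse to finitely many nonzero terms, so that interchanging these summations is justified. Everything hinges on showing that the particle-conservation constraints at every level, together with the forced projectors ${\bf k}^{\otimes(\cdot)}$ guaranteed by basicness at each of the levels $2,\dots,n$, pin down all occupation numbers; the bookkeeping that these pinnings propagate consistently through the whole triangular array $F^{\otimes n(n-1)/2}$ is where care is needed. By comparison, the homogeneity, the precise degree, and the nonnegative-integrality of the quotient are then routine.
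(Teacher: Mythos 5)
Your proposal is correct and takes essentially the same approach as the paper: the paper expands $\mathrm{Tr}\bigl(X_{\sigma_1}\cdots X_{\sigma_L}\bigr)$ into the product of level-$a$ traces over $F^{\otimes a-1}$ for all $a\in[2,n]$ at once and then invokes exactly your three facts --- occupation-number conservation restricting the sums to finitely many tuples, each forming a basic-sector configuration of the $a$-iTAZRP; basicness ($m_a\ge 1$) forcing at least one factor to reduce to the $w_aK_a=w_a{\bf k}^{\otimes a-1}$ term, which gives both finiteness of the trace and divisibility by $w_a$; and the degree-one count per operator yielding total degree $(n-1)L$, hence $(n-1)(L-1)$ after dividing by $w_2\cdots w_n$. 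Your induction on $n$ via the recursion for $X^{(n)}_\sigma$ is just the recursive repackaging of the paper's simultaneous expansion, so the two arguments coincide in substance.
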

\begin{proof}
From (\ref{X}) the trace 
$\mathrm{Tr}\bigl(X_{\sigma_1} \cdots X_{\sigma_L}\bigr)$
is expanded as
\begin{align}\label{ryb}
\sum \mathrm{Tr}_F\Bigl(
A^{(2)}_{\mu^{(1)}_1, \mu^{(2)}_1}\cdots
A^{(2)}_{\mu^{(1)}_L, \mu^{(2)}_L}\Bigr)\cdots
\mathrm{Tr}_{F^{\otimes n-1}}\Bigl(
A^{(n)}_{\mu^{(n-1)}_1, \sigma_1}\cdots
A^{(n)}_{\mu^{(n-1)}_L, \sigma_L}\Bigr),
\end{align}
where the sums extend over
$\mu^{(a)}_i \in (\Z_{\ge 0})^a$
for each $(a, i) \in [1,n-1] \times \Z_L$.
Let us set $\mu^{(a)}_i =
(\mu^{(a),1}_i,\ldots, \mu^{(a),a}_i)$.
Since $P_+(\mu), P_-(\overline{\alpha})$ and $K_r$ in (\ref{adef}) are 
creation, annihilation and diagonal operators respectively,
the traces are non-vanishing only if 
$\sum_{i \in \Z_L}\mu^{(a-1),b}_i 
= \sum_{i \in \Z_L}\mu^{(a),b}_i$ for $1\le b < a \le n$.
Here we have set $\mu^{(n),b}_i =\sigma^b_i$
$(b \in [1,n])$, which is the multiplicity of $b$ in $\sigma_i$ as in (\ref{cie}).
For the configuration 
$(\sigma_1, \ldots, \sigma_L)$ in the basic sector $S(m_1,\ldots, m_n)$,
this leads to the constraint 
\begin{align}\label{cst}
\sum_{i \in \Z_L}\mu^{(a)}_i = (m_1,m_2,\ldots, m_a) \quad(a \in [1,n])
\end{align}
so that there are only finitely many choices for the summation variables.
Remark that (\ref{cst}) is equivalent to saying that 
$(\mu^{(a)}_1,\ldots, \mu^{(a)}_L)$ is a configuration 
of the $a$-iTAZRP in the sector $S(m_1,\ldots, m_a)$
which is again basic.
In particular from
$\sum_{i \in \Z_L} \mu^{(a),a}_i = m_a>0$,
there is at east one $i \in \Z_L$ such that $\mu^{(a),a}_i>0$,
therefore the associated operator
$A^{(a)}_{\mu^{(a-1)}_i, \mu^{(a)}_i}$
reduces to the `last term' involving $w_aK_a$ in (\ref{adef}).
Since it contains $w_a{\bf k}^{\otimes a-1}$,
the factor $\mathrm{Tr}_{F^{\otimes a-1}}(
A^{(a)}_{\mu^{(a-1)}_1, \mu^{(a)}_1}\cdots
A^{(a)}_{\mu^{(a-1)}_L, \mu^{(a)}_L})$ is finite 
and divisible by $w_a$ for $a \in [2,n]$.
Thus the homogeneous degree of 
$(w_2\cdots w_n)^{-1}
\mathrm{Tr}\bigl(X_{\sigma_1} \cdots X_{\sigma_L}\bigr)$
is $(n-1)L-(n-1) = (n-1)(L-1)$.
The fact that the coefficients belong to $\Z_{\ge 0}$ is obvious.
\end{proof}

The key ingredient for proving Theorem \ref{th:P} is  
\begin{proposition}[Generalized hat relation]\label{pr:hat}
Let $n \ge 2$.
For any $\alpha, \beta \in(\Z_{\ge 0})^n$ and 
$\mu, \nu \in(\Z_{\ge 0})^{n-1}$, 
the $A_{\mu, \alpha}$ and $\hat{A}_{\mu,\alpha}$ 
in (\ref{adef}) and (\ref{ahdef}) satisfy 
\begin{align}\label{ghat}
\sum_{\gamma, \delta \in(\Z_{\ge 0})^n}
h^{\alpha, \beta}_{\gamma,\delta}
A_{\mu, \gamma}A_{\nu, \delta}
- \sum_{\kappa,\lambda\in(\Z_{\ge 0})^{n-1}} 
\overline{h}^{\,\kappa,\lambda}_{\mu,\nu}
A_{\kappa,\alpha}A_{\lambda,\beta}
= \hat{A}_{\mu,\alpha}A_{\nu,\beta}
-A_{\mu,\alpha}\hat{A}_{\nu,\beta}, 
\end{align}
where $\overline{h}^{\,\kappa,\lambda}_{\mu,\nu}$ denotes the 
matrix element (\ref{hdef}) of the local Markov matrix 
for the $(n-1)$-$\mathrm{iTAZRP}$ involving $w_1,\ldots, w_{n-1}$.
\end{proposition}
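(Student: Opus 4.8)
The plan is to prove (\ref{ghat}) by evaluating both sides on an arbitrary basis vector $|m\rangle=|m_1,\ldots,m_{n-1}\rangle$ of $F^{\otimes n-1}$ and matching the resulting scalars. The first preparatory step is to put $\hat{A}_{\mu,\alpha}$ into a shape parallel to $A_{\mu,\alpha}$. As already used in the proof of Lemma \ref{le:iii}, the operators $K_1,\ldots,K_n$ form a complete family of mutually orthogonal projectors, $K_rK_{r'}=\delta^{r}_{r'}K_r$ and $\sum_{r=1}^n K_r={\bf 1}^{\otimes n-1}$, since $K_r$ projects onto the states whose first nonzero entry sits in position $r$ (with $K_n$ the projector onto the zero state). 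Writing $M_\alpha$ for the bracketed diagonal factor of (\ref{adef}) and $W=\sum_{r=1}^n w_rK_r$ (a diagonal operator independent of $\alpha$), orthogonality gives $\hat{M}_\alpha=M_\alpha\bigl(W+g(\alpha){\bf 1}\bigr)$ for the corresponding factor of (\ref{ahdef}), hence
\begin{align*}
\hat{A}_{\mu,\alpha}=B_{\mu,\alpha}+g(\alpha)A_{\mu,\alpha},\qquad
B_{\mu,\alpha}:=P_+(\mu)\,W M_\alpha\,P_-(\overline{\alpha}).
\end{align*}

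Acting on $|m\rangle$, the annihilation--diagonal--creation form of (\ref{adef}) yields $A_{\mu,\alpha}|m\rangle=c_\alpha(p)\,|m-\overline{\alpha}+\mu\rangle$ with $p=m-\overline{\alpha}$, firing index $r(p)=\min\{b:p_b>0\}$ (set to $n$ when $p=0$), and $c_\alpha(p)=w_{r(p)}\,\theta(\alpha^{r(p)+1}=\cdots=\alpha^n=0)$; moreover $\hat{A}_{\mu,\alpha}|m\rangle=\hat{c}_\alpha(p)|m-\overline{\alpha}+\mu\rangle$ with $\hat{c}_\alpha(p)=c_\alpha(p)(w_{r(p)}+g(\alpha))$, while $B_{\mu,\alpha}|m\rangle=w_{r(p)}c_\alpha(p)\,|m-\overline{\alpha}+\mu\rangle$ carries one extra weight $w_{r(p)}$. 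The decisive simplification is that \emph{every} monomial in (\ref{ghat}) sends $|m\rangle$ to the same basis vector $|m-\overline{\alpha}-\overline{\beta}+\mu+\nu\rangle$: in the first sum this is the conservation law $\gamma+\delta=\alpha+\beta$ built into the definition (\ref{kyk}), and in the second it is $\kappa+\lambda=\mu+\nu$ for the $(n-1)$-process. Thus (\ref{ghat}) reduces to a single scalar identity for each $m$, with negative intermediate occupations simply annihilating the corresponding terms on both sides.

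To extract it, set $q=m-\overline{\beta}$ and $p=q+\nu-\overline{\alpha}$. The right-hand side contributes $c_\alpha(p)c_\beta(q)\bigl(w_{r(p)}-w_{r(q)}+g(\alpha)-g(\beta)\bigr)$. Parametrizing the $n$-process transitions $(\gamma,\delta)>(\alpha,\beta)$ by the moved multiset $\eta\subseteq\alpha$ (nonempty, with $\min\eta\ge\max\beta$ and rate $w_{\min\eta}$), one finds that the \emph{outer} coefficient in the first sum is always $c_{\alpha\setminus\eta}(p)$ at the same argument $p$ (the moved mass cancels), while $c_\beta(q)$ factors out of the entire second sum (parametrized by the top part $\zeta\subseteq\nu$). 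After accounting for the diagonal contributions of the two local Markov matrices, where $\overline{g}$ is the $(n-1)$-analogue of (\ref{gdef}), the claim (\ref{ghat}) becomes the scalar statement
\begin{align*}
&\sum_{\eta} w_{\min\eta}\,c_{\alpha\setminus\eta}(p)\,c_{\beta\cup\eta}(q-\overline{\eta})
-c_\beta(q)\sum_{\zeta} w_{\min\zeta}\,c_\alpha(p-\zeta)\\
&\qquad=c_\alpha(p)\,c_\beta(q)\bigl(w_{r(p)}-w_{r(q)}+g(\alpha)-\overline{g}(\nu)\bigr).
\end{align*}

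The main obstacle is exactly this last identity: the two off-diagonal sums must conspire with the diagonal terms to reproduce the firing-index difference $w_{r(p)}-w_{r(q)}$ together with the weight mismatch $g(\alpha)-\overline{g}(\nu)$. I would verify it by a finite case analysis on $r(p)$ and $r(q)$ and on the supports of $\alpha,\beta,\nu$, tracking which $\eta$ keep the indicator constraints inside $c_{\alpha\setminus\eta}(p)$ and $c_{\beta\cup\eta}(q-\overline{\eta})$ simultaneously alive; I expect the bulk of the $\eta$- and $\zeta$-contributions to cancel in pairs, leaving only the boundary terms that assemble $w_{r(p)}$ and $w_{r(q)}$. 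Controlling these indicator constraints on both sums at once is the delicate point, and it is precisely here that the minimum-species form $w_{\min(\gamma\setminus\alpha)}$ of the rate in (\ref{hdef}) is essential.
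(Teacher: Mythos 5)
Your reduction is correct as far as it goes, and it is cleanly organized: the observation that $K_1,\ldots,K_n$ are mutually orthogonal projectors summing to the identity, the resulting splitting $\hat{A}_{\mu,\alpha}=B_{\mu,\alpha}+g(\alpha)A_{\mu,\alpha}$, the fact that every operator in (\ref{ghat}) maps $|m\rangle$ to the single basis vector $|m-\overline{\alpha}-\overline{\beta}+\mu+\nu\rangle$ (by particle conservation in (\ref{kyk})), and the parametrization of the off-diagonal entries of $h$ and $\overline{h}$ by the moved multisets $\eta$ and $\zeta$ are all accurate. The scalar identity you arrive at, including the bookkeeping of the two diagonal contributions $-g(\beta)$ and $-\overline{g}(\nu)$, is indeed equivalent to (\ref{ghat}); spot checks at $n=2$ confirm the setup.

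The problem is that your proof stops exactly where the content of Proposition \ref{pr:hat} begins. The displayed scalar identity is never established: you say you \emph{would} verify it by a finite case analysis and that you \emph{expect} the bulk of the $\eta$- and $\zeta$-contributions to cancel in pairs. That is a plan plus a conjecture, not an argument. All of the combinatorial difficulty of the proposition---how the indicator constraints inside $c_{\alpha\setminus\eta}(p)$ and $c_{\beta\cup\eta}(q-\overline{\eta})$ interact, which $\eta$-terms pair off against which $\zeta$-terms, and how the surviving boundary terms assemble into $w_{r(p)}-w_{r(q)}+g(\alpha)-\overline{g}(\nu)$---is concentrated in that one identity and is left untouched. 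For comparison, the paper's own proof consists of precisely this kind of finite verification, organized differently (comparing coefficients of each cubic monomial $w_iw_jw_k$, $1\le i\le j\le k\le n$, on both sides of (\ref{ghat})); the authors state the lengthy check was carried out, whereas you have only reformulated what must be checked. So: a correct and useful setup, but with a genuine gap---the terminal identity must actually be proven, for instance by the case analysis on $r(p)$, $r(q)$ and the supports of $\alpha,\beta,\nu$ that you describe, before this constitutes a proof.
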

We have proved (\ref{ghat}) by
separating it into equalities on the  
coefficients of the monomial $w_iw_jw_k$ case by case 
for each triple $(i,j,k)$ with $1 \le i \le j \le k \le n$.
The calculation is direct but quite lengthy hence omitted here.
The relation (\ref{ghat}) reads as
$(A \otimes A)h - \overline{h}(A\otimes A) 
= \hat{A} \otimes A - A \otimes \hat{A}$ in the matrix notation, and 
has the same form as the inhomogeneous $n$-TASEP case \cite{AM}.

\vspace{0.3cm}
{\em Proof of Theorem \ref{th:P}}.
In view of (\ref{htn}) we are to show the first line of the following 
for any ${\boldsymbol \sigma} =(\sigma_1,\ldots, \sigma_L) \in 
S(m_1,\ldots, m_n)$:
\begin{align*}
0&=\sum_{i \in \Z_L}\sum_{\sigma'_i, \sigma'_{i+1}}
\!\!h^{\sigma_i, \sigma_{i+1}}_{\sigma'_i, \sigma'_{i+1}}
\mathrm{Tr}\bigl(
X_{\sigma_1}\cdots
X_{\sigma'_i}X_{\sigma'_{i+1}}\cdots
X_{\sigma_L}\bigr) \\
&= \sum_{{\boldsymbol \mu} \in S(\overline{\bf m})}
\!\!\!\!\mathrm{Tr}(X^{(n-1)}_{\mu_1}\cdots X^{(n-1)}_{\mu_L})
\sum_{i \in \Z_L}\sum_{\sigma'_i, \sigma'_{i+1}}
\!\!h^{\sigma_i, \sigma_{i+1}}_{\sigma'_i, \sigma'_{i+1}}
\mathrm{Tr}\bigl(
A^{(n)}_{\mu_1, \sigma_1}\cdots 
A^{(n)}_{\mu_i, \sigma'_i}A^{(n)}_{\mu_{i+1}, \sigma'_{i+1}}\cdots
A^{(n)}_{\mu_L, \sigma_L}\bigr),
\end{align*}
To get the second line 
we have substituted (\ref{xrec}) and 
set ${\boldsymbol \mu}=(\mu_1,\ldots,\mu_L)$.
Using the remark after (\ref{cst}),
we have regarded $\overline{\bf m} = (m_1,\ldots, m_{n-1})$ 
as a label of the basic sector $S(\overline{\bf m})$ of the $(n-1)$-iTAZRP.
Now Proposition \ref{pr:hat} can be applied to the sum over 
$\sigma'_i, \sigma'_{i+1}$.
Then the contribution from the right hand side of (\ref{ghat}) is cancelled
under the sum $\sum_{i \in \Z_L}$ thanks to the cyclicity of the trace.
Thus the relation to be verified becomes
\begin{align*}
0=\sum_{{\boldsymbol \mu} \in S(\overline{\bf m})}
\mathrm{Tr}\bigl(
A^{(n)}_{\mu_1, \sigma_1}\cdots
A^{(n)}_{\mu_L, \sigma_L}\bigr)
\sum_{i \in \Z_L}
\sum_{\mu'_i, \mu'_{i+1}}
\!\!\overline{h}^{\,\mu_i, \mu_{i+1}}_{\,\mu'_i, \mu'_{i+1}}
\,\mathrm{Tr}(X^{(n-1)}_{\mu_1}\cdots
X^{(n-1)}_{\mu'_i}X^{(n-1)}_{\mu'_{i+1}}\cdots X^{(n-1)}_{\mu_L}),
\end{align*}
where we have replaced the summation variables 
$(\mu_i, \mu_{i+1})$ with $(\mu'_i, \mu'_{i+1})$. 
By induction on $n$, the proof reduces to the case $n=2$.
From the convention mentioned after (\ref{xrec}), 
it amounts to checking 
$\sum_{i \in \Z_L}
\sum_{\mu'_i, \mu'_{i+1}}
\!\!\overline{h}^{\,\mu_i, \mu_{i+1}}_{\,\mu'_i, \mu'_{i+1}}=0$
for the $1$-iTAZRP local Markov matrix $\overline{h}$.
This has been shown in the paragraph preceding Example \ref{ex:LL}.
\qed

From Theorem \ref{th:P} and (\ref{xrec})  we have 
\begin{corollary}\label{co:kna}
The steady state probabilities $\mathbb{P}(\sigma_1,\ldots, \sigma_L)$ 
of the $n$-$\mathrm{iTAZRP}$
in the basic sector $S({\bf m})$ is expressed as
\begin{align}\label{prec}
\mathbb{P}(\sigma_1,\ldots, \sigma_L) = w_n^{-1}\!\!
\sum_{(\mu_1,\ldots,\mu_L)\in S(\overline{\bf m})}
\overline{\mathbb P}(\mu_1,\ldots,\mu_L)
\mathrm{Tr}_{F^{\otimes n-1}}\bigl(
A_{\mu_1, \sigma_1}\cdots
A_{\mu_L, \sigma_L}\bigr),
\end{align}
where $A_{\mu_i, \sigma_i}=A^{(n)}_{\mu_i, \sigma_i}$, ${\bf m}=(m_1,\ldots, m_n),
\overline{\bf m}=(m_1,\ldots, m_{n-1})$ and 
$\overline{\mathbb P}$ stands for the steady state probability 
in the $(n-1)$-$\mathrm{iTAZRP}$.
\end{corollary}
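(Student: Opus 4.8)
The plan is to read off the formula directly from Theorem \ref{th:P} by inserting the recursion (\ref{xrec}) into the matrix product trace and factorizing. First I would recall that Theorem \ref{th:P} gives $\mathbb{P}(\sigma_1,\ldots,\sigma_L)=(w_2\cdots w_n)^{-1}\mathrm{Tr}(X^{(n)}_{\sigma_1}\cdots X^{(n)}_{\sigma_L})$ with the trace over $F^{\otimes n(n-1)/2}$. Substituting $X^{(n)}_{\sigma_i}=\sum_{\mu_i\in(\Z_{\ge 0})^{n-1}}X^{(n-1)}_{\mu_i}\otimes A^{(n)}_{\mu_i,\sigma_i}$ from (\ref{xrec}) and expanding, the decisive point is the separation of tensor factors: under $F^{\otimes n(n-1)/2}=F^{\otimes(n-1)(n-2)/2}\otimes F^{\otimes(n-1)}$ (the count works since $(n-1)(n-2)/2+(n-1)=n(n-1)/2$), each $X^{(n-1)}_{\mu_i}$ acts on the first block and each $A^{(n)}_{\mu_i,\sigma_i}$ on the second, so $(A\otimes B)(C\otimes D)=(AC)\otimes(BD)$ yields $\prod_i\bigl(X^{(n-1)}_{\mu_i}\otimes A^{(n)}_{\mu_i,\sigma_i}\bigr)=\bigl(\prod_i X^{(n-1)}_{\mu_i}\bigr)\otimes\bigl(\prod_i A^{(n)}_{\mu_i,\sigma_i}\bigr)$.

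Next I would use that the trace of a tensor product factorizes, $\mathrm{Tr}(P\otimes Q)=\mathrm{Tr}(P)\,\mathrm{Tr}(Q)$, to write
\[
\mathrm{Tr}(X^{(n)}_{\sigma_1}\cdots X^{(n)}_{\sigma_L})
=\sum_{\mu_1,\ldots,\mu_L}
\mathrm{Tr}_{F^{\otimes(n-1)(n-2)/2}}\!\bigl(X^{(n-1)}_{\mu_1}\cdots X^{(n-1)}_{\mu_L}\bigr)\,
\mathrm{Tr}_{F^{\otimes n-1}}\!\bigl(A^{(n)}_{\mu_1,\sigma_1}\cdots A^{(n)}_{\mu_L,\sigma_L}\bigr).
\]
Applying Theorem \ref{th:P} to the $(n-1)$-iTAZRP identifies the first factor as $(w_2\cdots w_{n-1})\,\overline{\mathbb{P}}(\mu_1,\ldots,\mu_L)$. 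Substituting this in and combining the scalar prefactors, $(w_2\cdots w_n)^{-1}(w_2\cdots w_{n-1})=w_n^{-1}$, produces exactly (\ref{prec}) up to the range of the $\mu$-summation.

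To pin down that range I would invoke the support argument already established in the proof of Lemma \ref{le:iii}: since $P_+(\mu)$, $P_-(\overline{\alpha})$ and $K_r$ in (\ref{adef}) are raising, lowering and diagonal operators, the factor $\mathrm{Tr}_{F^{\otimes n-1}}(A^{(n)}_{\mu_1,\sigma_1}\cdots A^{(n)}_{\mu_L,\sigma_L})$ vanishes unless $\sum_{i\in\Z_L}\mu_i=(m_1,\ldots,m_{n-1})=\overline{\bf m}$, i.e. unless $(\mu_1,\ldots,\mu_L)\in S(\overline{\bf m})$; by the remark after (\ref{cst}) this sector is again basic for the $(n-1)$-iTAZRP, so $\overline{\mathbb{P}}$ is well defined on it. This simultaneously truncates the sum to $S(\overline{\bf m})$ and guarantees that only finitely many terms survive, so no convergence question arises. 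I do not expect any genuine obstacle: the result is a bookkeeping consequence of Theorem \ref{th:P} and (\ref{xrec}). The only steps demanding care are verifying the clean split of the tensor factors—that the $X^{(n-1)}$-block and the $A^{(n)}$-block act on complementary copies of $F$ so that both the operator product and the trace really factorize—and the restriction of the summation index, which is dispatched by the vanishing argument above rather than by any fresh computation.
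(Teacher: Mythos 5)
Your proposal is correct and is exactly the paper's intended argument: the paper states Corollary \ref{co:kna} as an immediate consequence of Theorem \ref{th:P} and the recursion (\ref{xrec}), and your substitution, trace factorization over $F^{\otimes(n-1)(n-2)/2}\otimes F^{\otimes n-1}$, cancellation $(w_2\cdots w_n)^{-1}(w_2\cdots w_{n-1})=w_n^{-1}$, and restriction of the sum to $S(\overline{\bf m})$ via the constraint (\ref{cst}) from Lemma \ref{le:iii} spell out precisely that derivation.
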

Corollary \ref{co:kna} is a recursion relation 
between the polynomials $\mathbb{P} \in \Z[w_1,\ldots, w_n]$
and $\overline{\mathbb P} \in \Z[w_1,\ldots, w_{n-1}]$.

\begin{example}\label{ex:nn1}
Let us confirm 
\begin{align*}
\mathbb{P}(\emptyset,\emptyset,12) = w_1^2+w_1w_2+w_2^2,
\quad
\mathbb{P}(\emptyset,2,1) = w_2^2+w_1w_2,
\quad
\mathbb{P}(\emptyset,1,2) = w_2^2,
\end{align*} 
which is contained in the 2-iTAZRP state $|\xi_3(1,1)\rangle$ 
in Example \ref{ex:LL}.
Since these configurations 
$(\sigma_1,\sigma_2,\sigma_3)$ 
are in the sector $S(1,1)$, the matrix product takes the form  
$\sum_{\mu_1,\mu_2,\mu_3}\mathrm{Tr}(
A_{\mu_1,\sigma_1}
A_{\mu_2,\sigma_2}
A_{\mu_3,\sigma_3})$
with the sum obeying $\mu_1+\mu_2+\mu_3=1$\footnote{
We write $A_{(\mu),(\alpha^1,\alpha^2)}$ for example simply as
$A_{\mu,\alpha^1\alpha^2}$.}.
Switching to the multiplicity representation they read as
\begin{align*}
\mathbb{P}(00,00,11)  &= 
w_2^{-1}\mathrm{Tr}(A_{0,00}A_{0,00}A_{1,11})
+w_2^{-1}\mathrm{Tr}(A_{0,00}A_{1,00}A_{0,11})
+w_2^{-1}\mathrm{Tr}(A_{1,00}A_{0,00}A_{0,11}),\\
\mathbb{P}(00,01,10)  &= 
w_2^{-1}\mathrm{Tr}(A_{0,00}A_{0,01}A_{1,10})
+w_2^{-1}\mathrm{Tr}(A_{0,00}A_{1,01}A_{0,10})
+w_2^{-1}\mathrm{Tr}(A_{1,00}A_{0,01}A_{0,10}),\\
\mathbb{P}(00,10,01)  &= 
w_2^{-1}\mathrm{Tr}(A_{0,00}A_{0,10}A_{1,01})
+w_2^{-1}\mathrm{Tr}(A_{0,00}A_{1,10}A_{0,01})
+w_2^{-1}\mathrm{Tr}(A_{1,00}A_{0,10}A_{0,01}).
\end{align*}
The relevant operators are given by
\begin{alignat*}{4}
&A_{0,00} = w_1{\bf d}+w_2{\bf k},\;\; &
&A_{1,00} = {\bf a}^+(w_1{\bf d}+w_2{\bf k}),\;\;&
&A_{0,01} = w_2{\bf k},\;\;&
&A_{1,01} = w_2{\bf a}^+{\bf k},\\
&A_{0,10} = (w_1{\bf d}+w_2{\bf k}){\bf a}^- ,\;\; &
&A_{1,10} = {\bf a}^+(w_1{\bf d}+w_2{\bf k}){\bf a}^- ,\;\;&
&A_{0,11} =w_2{\bf k}\,{\bf a}^-,\;\; & 
&A_{1,11} = w_2{\bf a}^+{\bf k}\,{\bf a}^-.
\end{alignat*}
By a direct calculation the above traces are evaluated as   
\begin{align*}
w_2^{-1}\begin{pmatrix}
\mathrm{Tr}(A_{0,00}A_{0,00}A_{1,11}) &
\mathrm{Tr}(A_{0,00}A_{1,00}A_{0,11}) &
\mathrm{Tr}(A_{1,00}A_{0,00}A_{0,11}) \\
\mathrm{Tr}(A_{0,00}A_{0,01}A_{1,10})&
\mathrm{Tr}(A_{0,00}A_{1,01}A_{0,10})&
\mathrm{Tr}(A_{1,00}A_{0,01}A_{0,10})\\
\mathrm{Tr}(A_{0,00}A_{0,10}A_{1,01})&
\mathrm{Tr}(A_{0,00}A_{1,10}A_{0,01})&
\mathrm{Tr}(A_{1,00}A_{0,10}A_{0,01})
\end{pmatrix} = 
\begin{pmatrix}
w_1^2 & w_1w_2 & w_2^2\\
0 & w_1w_2 & w_2^2 \\
w_2^2 & 0 & 0
\end{pmatrix}
\end{align*} 
reproducing the sought result.
\end{example}

\begin{example}\label{ex:nn2}
Let us confirm $\mathbb{P}(\emptyset,123) = (w_1+w_2)(w_1+w_2+w_3)$,
which is contained in the 3-iTAZRP state $|\xi_3(1,1,1)\rangle$ 
in Example \ref{ex:LL}.
This time we invoke Corollary \ref{co:kna}  and consider the formula 
\begin{align*}
&w_3^{-1}\sum_{
\mu^1_1+\mu^1_2=\mu^2_1+\mu^2_2=1} \overline{\mathbb P}(\mu^1_1\mu^2_1,\mu^1_2\mu^2_2)
\mathrm{Tr}(A_{\mu^1_1\mu^2_1,000}A_{\mu^1_2\mu^2_2,111})\\
&=w_3^{-1}
\overline{\mathbb P}(11,00)
\mathrm{Tr}(A_{11,000}A_{00,111})+
w_3^{-1}\overline{\mathbb P}(10,01)
\mathrm{Tr}(A_{10,000}A_{01,111})\\
&+
w_3^{-1}\overline{\mathbb P}(01,10)
\mathrm{Tr}(A_{01,000}A_{10,111})+
w_3^{-1}\overline{\mathbb P}(00,11)
\mathrm{Tr}(A_{00,000}A_{11,111}).
\end{align*}
For the 2-iTAZRP steady state probabilities $\overline{\mathbb P}$, we 
apply the result in Example \ref{ex:LL} and the cyclic symmetry to find 
$\overline{\mathbb P}(11,00)=\overline{\mathbb P}(00,11)=w_1+w_2$ and 
$\overline{\mathbb P}(10,01) = \overline{\mathbb P}(10,01) = w_2$.
The relevant operators are given by 
\begin{align*}
A_{00,000} &= w_1{\bf d} \otimes {\bf 1} + w_2 {\bf k} \otimes {\bf d} 
+ w_3 {\bf k} \otimes {\bf k},
\quad
A_{00,111} = w_3({\bf k} \otimes {\bf k})({\bf a}^- \otimes {\bf a}^-),\\
A_{10,000} &= ({\bf a}^+ \otimes {\bf 1})A_{00,000},\;\;
A_{01,000} = ({\bf 1} \otimes {\bf a}^+)A_{00,000},\;\;
A_{11,000} = ({\bf a}^+ \otimes {\bf a}^+)A_{00,000},\\
A_{10,111}  &= ({\bf a}^+ \otimes {\bf 1})A_{00,111},\;\;
A_{01,111}  = ({\bf 1} \otimes {\bf a}^+)A_{00,111},\;\;
A_{11,111}  = ({\bf a}^+ \otimes {\bf a}^+)A_{00,111}.
\end{align*}
Then the above traces are evaluated as
\begin{align*}
w_3^{-1}\begin{pmatrix}
\mathrm{Tr}(A_{11,000}A_{00,111}) &
\mathrm{Tr}(A_{10,000}A_{01,111}) \\
\mathrm{Tr}(A_{01,000}A_{10,111}) &
\mathrm{Tr}(A_{00,000}A_{11,111})
\end{pmatrix} = 
\begin{pmatrix}
w_3 & w_2\\
w_1 & w_1
\end{pmatrix}
\end{align*} 
reproducing the sought result.

\end{example}

\section{Combinatorial construction of steady state}\label{sec:mp}

Steady states of the $n$-iTAZRP can also be constructed via 
a combinatorial algorithm generalizing the one in \cite[Sec.4.3]{KMO3} 
by introducing the inhomogeneity parameters $w_1,\ldots, w_n$\footnote{
The convention of labeling the particle species 
here is opposite from \cite{KMO3} causing many changes.}.
It may be viewed as a TAZRP analogue of the results
on the multispecies inhomogeneous TASEP \cite{AL,AM} whose 
homogeneous case $w_1= \cdots =w_n$ goes back to \cite{FM}.  

\subsection{Combinatorial formula}

Given a multiplicity array 
${\bf m}=(m_1,\ldots, m_n) \in (\Z_{\ge 1})^n$ specifying 
a basic sector $S({\bf m})$ of the $n$-iTAZRP 
in the periodic chain $\Z_L$, define 
$\ell_1, \ldots, \ell_n$ by 
\begin{align}\label{mkr:akci}
\ell_a = m_1+m_2 + \cdots + m_a
\end{align}
so that $1\le \ell_1< \ell_2 < \cdots < \ell_n$. 
Associated with the data we introduce the finite sets
\begin{equation}\label{mkr}
\begin{split}
B({\bf m}) &= B_{\ell_n} \otimes \cdots \otimes B_{\ell_1}
= \{{\bf x}= {\bf x}^n\otimes \cdots \otimes {\bf x}^1 \mid
{\bf x}^a = (x^a_1,\ldots, x^a_L) \in B_{\ell_a}\},\\
B_\ell &=\{(x_1,\ldots, x_L)\in (\Z_{\ge 0})^L\mid
x_1+ \cdots + x_L = \ell\},
\end{split}
\end{equation} 
where $\otimes$ may just be regarded as the direct product of sets.
Elements in $B({\bf m})$ were called {\em multiline states} in \cite{KMO3}
in analogy with \cite{FM}.
The $B({\bf m})$ is endowed with the structure of a crystal \cite{Ka1} of
the the quantum affine algebra $U_q(\widehat{sl}_L)$ \cite{D86,J},
although this aspects will not be used in the sequel.
Our main task, which will be detailed in the 
next subsection, is to formulate a projection $\pi$ and 
a {\em weight} function $W$ 
\begin{align}\label{hzks}
\pi: B({\bf m}) \rightarrow S({\bf m}),\qquad
W: B({\bf m}) \rightarrow 
\{ w_1^{p_1}\cdots w_n^{p_n} \mid p_1,\ldots, p_n \in \Z_{\ge 0}\}
\end{align}
such that the following formula holds:
\begin{theorem}\label{th:we}
The steady state probability 
in Theorem \ref{th:P} in the sector $S({\bf m})$ is expressed as
\begin{align*}
\mathbb{P}({\boldsymbol \sigma})
= \sum_{{\bf x} \in \pi^{-1}({\boldsymbol \sigma})}W({\bf x}),
\end{align*}
or equivalently, the steady state is constructed as
\begin{align*}
|\mathscr{P}_L({\bf m})\rangle
= \sum_{{\bf x} \in B({\bf m})}W({\bf x})|\pi({\bf x})\rangle.
\end{align*}
\end{theorem}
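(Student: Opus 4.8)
The plan is to prove Theorem \ref{th:we} by induction on $n$, running it in parallel with the recursion of Corollary \ref{co:kna}, so that the combinatorial formula is matched to the already-established matrix product formula of Theorem \ref{th:P}. The set $B({\bf m})$ carries exactly the layered structure needed for this: writing $\overline{\bf m}=(m_1,\ldots,m_{n-1})$, one has $B({\bf m})=B_{\ell_n}\otimes B(\overline{\bf m})$, and an element ${\bf x}={\bf x}^n\otimes\overline{\bf x}$ splits into its top line ${\bf x}^n\in B_{\ell_n}$ and the lower multiline state $\overline{\bf x}\in B(\overline{\bf m})$. I would arrange the projection $\pi$ and the weight $W$ of the next subsection so that the top line is processed in a single combinatorial step against $\overline{\pi}(\overline{\bf x})$, and so that $W$ factors as $W({\bf x})=\overline{W}(\overline{\bf x})\,w({\bf x}^n\mid\overline{\pi}(\overline{\bf x}))$, where $\overline{\pi},\overline{W}$ are the data for the $(n-1)$-iTAZRP and $w(\,\cdot\mid\cdot\,)$ is a monomial in $w_1,\ldots,w_n$ recording the single step.

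Granting this design, one groups the sum $\sum_{{\bf x}\in\pi^{-1}({\boldsymbol\sigma})}W({\bf x})$ according to the lower projection ${\boldsymbol\mu}=\overline{\pi}(\overline{\bf x})$, obtaining
\[
\sum_{{\bf x}\in\pi^{-1}({\boldsymbol\sigma})}W({\bf x})=\sum_{{\boldsymbol\mu}\in S(\overline{\bf m})}\Bigl(\sum_{\overline{\bf x}\in\overline{\pi}^{-1}({\boldsymbol\mu})}\overline{W}(\overline{\bf x})\Bigr)\sum_{{\bf x}^n}w({\bf x}^n\mid{\boldsymbol\mu}),
\]
where the inner top-line sum ranges over those ${\bf x}^n\in B_{\ell_n}$ whose step against ${\boldsymbol\mu}$ produces ${\boldsymbol\sigma}$. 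By the induction hypothesis the bracketed factor equals $\overline{\mathbb P}({\boldsymbol\mu})$. Comparing with (\ref{prec}), the theorem follows once the top-layer identity
\[
\sum_{{\bf x}^n}w({\bf x}^n\mid{\boldsymbol\mu})=w_n^{-1}\,\mathrm{Tr}_{F^{\otimes n-1}}\bigl(A^{(n)}_{\mu_1,\sigma_1}\cdots A^{(n)}_{\mu_L,\sigma_L}\bigr)
\]
is verified for each fixed ${\boldsymbol\mu}\in S(\overline{\bf m})$ and ${\boldsymbol\sigma}\in S({\bf m})$. The base case $n=1$ is immediate: $B(m_1)=B_{\ell_1}$ is the set of configurations, $\pi$ is the identity and $W\equiv 1$, reproducing the uniform $1$-iTAZRP steady state established in the paragraph preceding Example \ref{ex:LL}.

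This last identity is the crux, and I expect the evaluation of the Fock-space trace to be the main obstacle. The strategy is to unwind it combinatorially: expanding each $A^{(n)}_{\mu_i,\sigma_i}$ through (\ref{adef}), the creation and annihilation operators ${\bf a}^{\pm}$ make the Fock occupation numbers track running partial sums as the chain is read cyclically around $\Z_L$, while the diagonal operators ${\bf k},{\bf d}$ appearing in the $K_r$, together with the indicator $\delta^{0,\ldots,0}_{\alpha^{r+1},\ldots,\alpha^n}$, record which species is emitted at each site and supply the factor $w_r$. One must then show that the nonzero contributions to the trace are in bijection with the admissible placements of the top line ${\bf x}^n$ relative to ${\boldsymbol\mu}$ — that is, with the fiber defining the combinatorial step — and that the accumulated product of $w_r$'s reproduces the monomial $w({\bf x}^n\mid{\boldsymbol\mu})$, the overall $w_n^{-1}$ accounting for the guaranteed species-$n$ emission already extracted in Lemma \ref{le:iii}. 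Verifying this bijection and weight bookkeeping, which is precisely where the queueing/combinatorial-$R$ structure is encoded by the operators, is the delicate part; everything else is a formal induction.
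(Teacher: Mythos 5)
Your proposal is correct and follows essentially the same route as the paper: the paper's proof expands the trace layer by layer via (\ref{ryb}) and interprets each Fock-trace factor exactly as you describe — ${\bf a}^{\pm}$ emit/absorb $H$-lines, ${\bf k}$ forces complete absorption at a left border while ${\bf d}$ forces at least one color-$r$ line through it giving $\eta_{i-1}=w_r$, and the $w_aK_a$ term gives $\eta_{i-1}=w_a$ — so that nonzero contributions correspond to admissible top lines ${\bf x}^a$ with accumulated weight $w_a\varpi_{{\bf x}^a}$. Your induction on $n$ through Corollary \ref{co:kna} is merely the recursive repackaging of that same multi-layer expansion, and the ``delicate'' top-layer identity you isolate is precisely the content of the paper's own argument.
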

The proof will be given after Remark \ref{re:okne}.

\subsection{Construction of the maps $\pi$ and $W$}

The map $\pi$ has been constructed in \cite[Sec.4]{KMO3}.
Let us recall it in the form adapted to the present convention.  
The weight function $W$, which is new, will also be determined in the course of it.
Our construction is recursive with respect to $n$.

Let  $a \in [2,n]$.
Associated with any ${\bf x}^a \in B_{\ell_a}$ 
we introduce the maps\footnote{The notation
$\Phi_{{\bf x}^a}$ is slightly incomplete in that 
the $a$-dependence becomes invisible when ${\bf x}^a$ is written generally as 
${\bf y}$ for example. However we prefer it for simplicity. A proper alternative is 
to formulate it as a map $S(m_1,\ldots, m_{a-1})\times B_{\ell_a}\;
\longrightarrow S(m_1,\ldots, m_a)$. 
A similar caution applies to $\varpi_{{\bf x}^a}$.}
\begin{equation}\label{utkr}
\begin{split}
\Phi_{{\bf x}^a}: &S(m_1,\ldots, m_{a-1})
\rightarrow S(m_1,\ldots, m_a),\\
\varpi_{{\bf x}^a}: &S(m_1,\ldots, m_{a-1})
\rightarrow \{ w_1^{p_1}\cdots w_a^{p_a} \mid p_1,\ldots, p_a \in \Z_{\ge 0}\}.
\end{split}
\end{equation}
Note that $S(m_1,\ldots, m_{a-1})$ and 
$S(m_1,\ldots, m_a)$ are the sets of configurations of  
$(a-1)$-iTAZRP and $a$-iTAZRP 
in the basic sectors.
The map $\Phi_{{\bf x}^{a}}$ 
called TAZRP {\em embedding rule} in \cite{KMO3}
and $\varpi_{{\bf x}^a}$ 
are defined through {\it Step} 0 -- {\it Step} 3 in the sequel.

\vspace{0.2cm}
{\it Step} 0.  Draw the $(a-1)$-iTAZRP configuration 
${\boldsymbol \sigma}$ above and the 
${\bf x}^a\in B_{\ell_a}$ below
in the two-row diagram as the following example for  
$L=7, a=4, \ell_a=9, {\bf x}^4=(0,2,1,2,0,1,3)\in B_9$
and the 3-iTAZRP state 
${\boldsymbol \sigma}=(\emptyset,13,2,3,\emptyset,12,11)$ 
in multiset representation\footnote{
This is an abbreviation of 
$(\emptyset, \{1,3\}, \{2\}, \{3\}, \emptyset, \{1,2\}, \{1,1\})$ as in 
Example \ref{ex:LL}.}.
\begin{equation*}
\begin{picture}(300,55)(-65,-1)

\multiput(0,0)(25,0){8}{\put(0,0){\line(0,1){50}}}
\multiput(0,0)(0,25){3}{\put(0,0){\line(1,0){175}}}

\put(-100,33){3-iTAZRP state \;${\boldsymbol \sigma}$}
\put(-25,9){${\bf x}^4$}
\put(35,39){3}\put(35,28){1}
\put(60,34){2}
\put(85,34){3}
\put(135,39){2}\put(135,28){1}
\put(160,39){1}\put(160,28){1}
\put(35,14){$\bullet$}\put(35,4){$\bullet$}
\put(60,9){$\bullet$}
\put(85,14){$\bullet$}\put(85,4){$\bullet$}
\put(135,9){$\bullet$}
\put(160,16.5){$\bullet$}\put(160,9.5){$\bullet$}\put(160,2.5){$\bullet$}

\end{picture}
\end{equation*}
The element ${\bf x}^a\in B_{\ell_a}$ is depicted as a dot pattern.
The positions of the dots and the particles (numbers in the top row) 
{\em within} a box do not matter.
Each dot is either {\em colored} or {\em uncolored}.
Initially they are all uncolored.

\vspace{0.2cm}
{\it Step} 1. Do the pairing procedures $(1), (2), \ldots, (a\!-\!1)$ in this order, where
$(b)$ is to connect the species $b$ particles upstairs
to partner dots downstairs according to 
$(b-\mathrm{i})$ and 
$(b-\mathrm{ii})$ below,
where  ``W"(west) and ``left" are meant in the periodic sense,
namely, W or left to the leftmost column is the rightmost one.

\begin{itemize}
\item[$(b-\mathrm{i})$]
Pick any species $b$ particle\footnote{The arbitrariness of the choice  
does not spoil the well-definedness. See Remark \ref{re:okne} (i).} 
and draw a line from it to its SW neighbor box
via the {\em left adjacent} box\footnote{Hence every line starts 
from a particle with the shape 
$\lceil$ rather than $\rfloor$.}.  
If there are uncolored dots in the SW neighbor box, 
any one of them can be chosen as the partner. 
If uncolored dots are absent there,  the line should further proceed to the left until it 
first encounters an uncolored dot. 
If there are more than one uncolored dots in the same box,
any one of them can be made the partner.
Treat the so captured partner dot as colored onward. 
The line connecting $b$ and its partner shall be called an {\em H-line} with color $b$.

\item[$(b-\mathrm{ii})$]
Repeat $(b-\mathrm{i})$ for all the remaining species $b$ particles upstairs.
\end{itemize}

\vspace{0.2cm}
{\it Step} 2. Regard the colored dots captured 
by color $b$ $H$-lines as species $b$ particles
for $b \in [1,a-1]$.
Regard yet uncolored dots as species $a$ particles.
Then the bottom line gives the 
$a$-iTAZRP configuration $\Phi_{{\bf x}^{a}}({\boldsymbol \sigma})$.

\vspace{0.2cm}
{\it Step} 3. From the diagram obtained by completing {\it Step} 1, 
$\varpi_{{\bf x}^a}({\boldsymbol \sigma})$ is determined as 
\begin{align}\label{eta}
&\varpi_{{\bf x}^a}({\boldsymbol \sigma}) =
w_a^{-1}\eta_1\eta_2 \cdots \eta_L,
\qquad
\eta_i= \begin{cases}
w_a & \text{if}\;\; \mathrm{col}_i = \emptyset, \\
w_{\min(\mathrm{col}_i)} & \text{otherwise},
\end{cases}
\end{align}
where 
$\mathrm{col}_i\,(i \in \Z_L)$ is the set of colors of the $H$-lines that cross 
the border of the $i$-th and the $(i+1)$-th boxes in the {\em bottom} row horizontally.

\vspace{0.2cm}
In our ongoing example, doing $(1)$ to the previous diagram leads to
\begin{equation*}
\begin{picture}(300,55)(-65,-1)

\multiput(0,0)(25,0){8}{\put(0,0){\line(0,1){50}}}
\multiput(0,0)(0,25){3}{\put(0,0){\line(1,0){175}}}

\put(35,39){3}\put(35,28){1}
\put(60,34){2}
\put(85,34){3}
\put(135,39){2}\put(135,28){1}
\put(160,39){1}\put(160,28){1}
\put(35,14){$\bullet$}\put(35,4){$\bullet$}
\put(60,9){$\bullet$}
\put(85,14){$\bullet$}\put(85,4){$\bullet$}
\put(135,9){$\bullet$}
\put(160,16.5){$\bullet$}\put(160,9.5){$\bullet$}\put(160,2.5){$\bullet$}

\drawline(158,42)(143,42)
\drawline(143,42)(143,11.3)\drawline(143,11.3)(136,11.3)

\drawline(158,31)(146,31)
\drawline(146,31)(146,6)\drawline(146,6)(86,6)

\drawline(133,31)(121,31)\drawline(121,31)(121,16)
\drawline(121,16)(86,16)

\drawline(33,31)(21,31)\drawline(21,31)(21,11.3)
\drawline(21,11.3)(-4,11.3)\put(-14,11){...}
\put(0,0.7){\drawline(163,11.3)(178,11.3)\put(179,11){...}}

\end{picture}
\end{equation*}
Doing $(2)$ to this diagram leads to
\begin{equation*}
\begin{picture}(300,55)(-65,-1)

\multiput(0,0)(25,0){8}{\put(0,0){\line(0,1){50}}}
\multiput(0,0)(0,25){3}{\put(0,0){\line(1,0){175}}}

\put(35,39){3}\put(35,28){1}
\put(60,34){2}
\put(85,34){3}
\put(135,39){2}\put(135,28){1}
\put(160,39){1}\put(160,28){1}
\put(35,14){$\bullet$}\put(35,4){$\bullet$}
\put(60,9){$\bullet$}
\put(85,14){$\bullet$}\put(85,4){$\bullet$}
\put(135,9){$\bullet$}
\put(160,16.5){$\bullet$}\put(160,9.5){$\bullet$}\put(160,2.5){$\bullet$}

\drawline(158,42)(143,42)
\drawline(143,42)(143,11.3)\drawline(143,11.3)(136,11.3)

\drawline(158,31)(146,31)
\drawline(146,31)(146,6)\drawline(146,6)(86,6)

\drawline(133,42)(116,42)
\drawline(116,42)(116,21)\drawline(116,21)(71,21)
\drawline(71,21)(71,11.5)\drawline(71,11.5)(62.5,11.5)

\drawline(133,31)(121,31)\drawline(121,31)(121,16)
\drawline(121,16)(86,16)

\drawline(58,37)(43,37)
\drawline(43,37)(43,16.2)\drawline(43,16.2)(36,16.2)

\drawline(33,31)(21,31)\drawline(21,31)(21,11.3)
\drawline(21,11.3)(-4,11.3)\put(-14,11){...}
\put(0,0.7){\drawline(163,11.3)(178,11.3)\put(179,11){...}}

\end{picture}
\end{equation*}
One of the color 2 $H$-line is making extra 90$^\circ$ turns just 
before capturing the partner dot rather than going straight.
This is just by the presentational reason to avoid the intersection of $H$-lines and 
has no significance. 
Doing $(3)$ similarly to this diagram leads to
\begin{equation*}
\begin{picture}(300,55)(-65,-1)

\multiput(0,0)(25,0){8}{\put(0,0){\line(0,1){50}}}
\multiput(0,0)(0,25){3}{\put(0,0){\line(1,0){175}}}

\put(35,39){3}\put(35,28){1}
\put(60,34){2}
\put(85,34){3}
\put(135,39){2}\put(135,28){1}
\put(160,39){1}\put(160,28){1}
\put(35,14){$\bullet$}\put(35,4){$\bullet$}
\put(60,9){$\bullet$}
\put(85,14){$\bullet$}\put(85,4){$\bullet$}
\put(135,9){$\bullet$}
\put(160,16.5){$\bullet$}\put(160,9.5){$\bullet$}\put(160,2.5){$\bullet$}

\drawline(158,42)(143,42)
\drawline(143,42)(143,11.3)\drawline(143,11.3)(136,11.3)

\drawline(158,31)(146,31)
\drawline(146,31)(146,6)\drawline(146,6)(86,6)

\drawline(133,42)(116,42)
\drawline(116,42)(116,21)\drawline(116,21)(71,21)
\drawline(71,21)(71,11.5)\drawline(71,11.5)(62.5,11.5)

\drawline(133,31)(121,31)\drawline(121,31)(121,16)
\drawline(121,16)(86,16)

\drawline(58,37)(43,37)
\drawline(43,37)(43,16.2)\drawline(43,16.2)(36,16.2)

\drawline(82,37)(67,37)\drawline(67,37)(67,21)
\drawline(67,21)(46.5,21)\drawline(46.5,21)(46.5,6)
\drawline(46.5,6)(36,6)

\drawline(33,31)(21,31)\drawline(21,31)(21,11.3)
\drawline(21,11.3)(-4,11.3)\put(-14,11){...}
\put(0,0.7){\drawline(163,11.3)(178,11.3)\put(179,11){...}}

\drawline(33,42)(16,42)\drawline(16,42)(16,16)
\drawline(16,16)(-4,16)\put(-14,15.7){...}
\put(0,0.2){\drawline(163,18.3)(178,18.3)\put(179,18){...}}

\end{picture}
\end{equation*}
{\it Step} 1 is completed.
In the next diagram 
we display the 4-iTAZRP state 
$\Phi_{{\bf x}^a}({\boldsymbol \sigma})$ obtained as 
the result of {\it Step} 2 together with 
the quantities $\eta_i$'s in (\ref{eta}).
\begin{equation*}
\begin{picture}(300,105)(-65,-50)

\multiput(0,0)(25,0){8}{\put(0,0){\line(0,1){50}}}
\multiput(0,0)(0,25){3}{\put(0,0){\line(1,0){175}}}

\put(35,39){3}\put(35,28){1}
\put(60,34){2}
\put(85,34){3}
\put(135,39){2}\put(135,28){1}
\put(160,39){1}\put(160,28){1}
\put(35,14){$\bullet$}\put(35,4){$\bullet$}
\put(60,9){$\bullet$}
\put(85,14){$\bullet$}\put(85,4){$\bullet$}
\put(135,9){$\bullet$}
\put(160,16.5){$\bullet$}\put(160,9.5){$\bullet$}\put(160,2.5){$\bullet$}

\drawline(158,42)(143,42)
\drawline(143,42)(143,11.3)\drawline(143,11.3)(136,11.3)

\drawline(158,31)(146,31)
\drawline(146,31)(146,6)\drawline(146,6)(86,6)

\drawline(133,42)(116,42)
\drawline(116,42)(116,21)\drawline(116,21)(71,21)
\drawline(71,21)(71,11.5)\drawline(71,11.5)(62.5,11.5)

\drawline(133,31)(121,31)\drawline(121,31)(121,16)
\drawline(121,16)(86,16)

\drawline(58,37)(43,37)
\drawline(43,37)(43,16.2)\drawline(43,16.2)(36,16.2)

\drawline(82,37)(67,37)\drawline(67,37)(67,21)
\drawline(67,21)(46.5,21)\drawline(46.5,21)(46.5,6)
\drawline(46.5,6)(36,6)

\drawline(33,31)(21,31)\drawline(21,31)(21,11.3)
\drawline(21,11.3)(-4,11.3)\put(-14,11){...}
\drawline(163,11.3)(178,11.3)\put(179,11){...}

\drawline(33,42)(16,42)\drawline(16,42)(16,16)
\drawline(16,16)(-4,16)\put(-14,15.7){...}
\put(0,0.2){\drawline(163,18.3)(178,18.3)\put(179,18){...}}

\put(-54,-10){$\Phi_{{\bf x}^4}({\boldsymbol \sigma})$}
\put(10,-12){$\emptyset$}
\put(34,-12){$2 3$}
\put(60,-12){$2$}
\put(83,-12){$1 1$}
\put(110,-12){$\emptyset$}
\put(135,-12){$1$}
\put(156,-12){$134$}

\put(25,-20){\vector(0,1){17}}
\put(22,-27){$\eta_1$}\put(24,-37)
{\rotatebox[origin=c]{90}{$=$}}\put(22,-46){$w_4$}

\put(25,0){
\put(25,-20){\vector(0,1){17}}
\put(22,-27){$\eta_2$}\put(24,-37)
{\rotatebox[origin=c]{90}{$=$}}\put(22,-46){$w_3$}}

\put(50,0){
\put(25,-20){\vector(0,1){17}}
\put(22,-27){$\eta_3$}\put(24,-37)
{\rotatebox[origin=c]{90}{$=$}}\put(22,-46){$w_2$}}

\put(75,0){
\put(25,-20){\vector(0,1){17}}
\put(22,-27){$\eta_4$}\put(24,-37)
{\rotatebox[origin=c]{90}{$=$}}\put(22,-46){$w_1$}}

\put(100,0){
\put(25,-20){\vector(0,1){17}}
\put(22,-27){$\eta_5$}\put(24,-37)
{\rotatebox[origin=c]{90}{$=$}}\put(22,-46){$w_1$}}

\put(125,0){
\put(25,-20){\vector(0,1){17}}
\put(22,-27){$\eta_6$}\put(24,-37)
{\rotatebox[origin=c]{90}{$=$}}\put(22,-46){$w_4$}}

\put(150,0){
\put(25,-20){\vector(0,1){17}}
\put(22,-27){$\eta_7$}\put(24,-37)
{\rotatebox[origin=c]{90}{$=$}}\put(22,-46){$w_1$}}

\end{picture}
\end{equation*}
In this way we obtain the 4-iTAZRP configuration
$\Phi_{{\bf x}^4}({\boldsymbol \sigma}) 
= (\emptyset, 23, 2,11,\emptyset,1,134)$ 
in multiset representation and 
$\varpi_{{\bf x}^4}({\boldsymbol \sigma})= 
w_1^3w_2w_3w_4$.

Having formulated the maps (\ref{utkr}), 
we can now define $\pi$ and $W$ in (\ref{hzks}) as 
\begin{align}\label{mddm}
\pi({\bf x}^n \otimes \cdots \otimes {\bf x}^1) = {\boldsymbol \sigma}^n,\qquad
W({\bf x}^n \otimes \cdots \otimes {\bf x}^1) = \prod_{a=2}^n
\varpi_{{\bf x}^a}({\boldsymbol \sigma}^{a-1}).
\end{align} 
Here ${\boldsymbol \sigma}^a$ is the 
$a$-iTAZRP configuration in the basic sector 
$S(m_1,\ldots, m_a)$ constructed as 
\begin{align}\label{mgdn}
{\boldsymbol \sigma}^a = \Phi_{{\bf x}^a}({\boldsymbol \sigma}^{a-1}) =
\Phi_{{\bf x}^a}\circ \Phi_{{\bf x}^{a-1}}
\circ \cdots \circ \Phi_{{\bf x}^2}({\bf x}^1)\quad (a \in [2,n])
\end{align}
by identifying ${\bf x}^1=(x^1_1,\ldots, x^1_L) \in B_{\ell_1=m_1}$ 
with the 1-iTAZRP configuration
$(x^1_1,\ldots, x^1_L)$ in multiplicity representation, or equivalently 
$(1^{x^1_1},\ldots, 1^{x^1_L})$ in multiset representation.
It is in the basic sector $S(m_1)$.
In particular when $n=1$ we understand these definitions as
$\pi({\bf x}^1)= {\boldsymbol \sigma}^1 = {\bf x}^1$ and 
$W({\bf x}^1)=1$.

\begin{remark}\label{re:okne}
$ $
\begin{enumerate}

\item {\it Step} 1 is a part of the 
algorithm for the combinatorial $R$ for $U_q(\widehat{sl}_L)$ known as 
the NY-rule \cite[Rule 3.11]{NY}. 
In $(b-\mathrm{i})$ and $(b-\mathrm{ii})$,
the $H$-lines depend on the order of choosing species $b$ particles from  
${\boldsymbol \sigma}$. 
However the resulting 
$\Phi_{{\bf x}^a}({\boldsymbol \sigma})$ and 
$\varpi_{{\bf x}^a}({\boldsymbol \sigma})$ are independent of it 
due to \cite[Prop.3.20]{NY}. 

\item In (\ref{eta}), $\eta_1\cdots \eta_L$ is always divisible by $w_a$.
To see this, notice that there remains exactly $m_a\ge 1$ uncolored dots 
after {\it Step} 1 because the numbers of the particles upstairs is 
$\ell_{a-1}$ and the dots downstairs is $\ell_a=\ell_{a-1}+m_a$.
Thus there is at least one box in the bottom row containing at least one uncolored dot.
The left border of such a box must not be crossed by an $H$-line 
as it contradicts the rule under which they are drawn.  
Thus the factor $\eta_i$ assigned to this border is $w_a$.
In the above example, $\eta_6$ is it. 
(There can be more than one such $\eta_i$ in general.)
\end{enumerate}
\end{remark}

We have completed the description of the maps $\pi$ and $W$.
The nested construction (\ref{mgdn}) based on the combinatorial $R$ 
resembles the combinatorial Bethe ansatz \cite{KOSTY}.
However it is not known to us if $W$ has a natural interpretation 
in crystal theory like {\em energy} \cite{NY}.

\vspace{0.2cm}
{\em Proof of Theorem \ref{th:we}}.
The claim is actually the combinatorial interpretation 
of the matrix product formula in Theorem \ref{th:P}.
We explain it along the factor 
$A^{(a)}_{\mu^{(a-1)}_i, \mu^{(a)}_i}$ in (\ref{ryb}).
For simplicity write it as
$A^{(a)}_{(\mu^1,\ldots, \mu^{a-1}), (\alpha^1,\ldots, \alpha^a)}$ 
and consider its summand involving $K_r$ with $r \in [1,a-1]$ given in 
(\ref{adef}):
\begin{align}
&w_r\delta_{\alpha^{r+1}, \ldots,  \alpha^a}^{\; \,0,\, . . . \ldots, \, 0}
P_+(\mu)
(\overbrace{{\bf k} \otimes \cdots \otimes {\bf k}}^{r-1}\otimes 
{\bf d} \otimes \overbrace{{\bf 1} \otimes \cdots \otimes {\bf 1}}^{a-1-r})
P_-(\overline{\alpha}),
\label{bkkj}\\
&P_+(\mu) = 
({\bf a}^+)^{\mu^1} \otimes \cdots \otimes ({\bf a}^+)^{\mu^{a-1}},\quad
P_-(\overline{\alpha}) = 
({\bf a}^-)^{\alpha^1} \otimes \cdots \otimes ({\bf a}^-)^{\alpha^r}\otimes 
\overbrace{{\bf 1} \otimes \cdots \otimes {\bf 1}}^{a-1-r},
\nonumber
\end{align}
where we have taken the condition $\alpha^{r+1}= \cdots = \alpha^a=0$ 
in $P_-(\overline{\alpha}) $ into account. 
The operator (\ref{bkkj})  acts on the vectors of the form 
$|s_1\rangle \otimes \cdots \otimes  |s_{a-1}\rangle \in F^{\otimes a-1}$.
We interpret it as the local situation in the two-row diagram obtained by {\em Step 1} 
in which there are $s_b$ $H$-lines with color $b$
coming into the $i$-th bottom box from the left or from the above 
as follows\footnote{Possible incoming $H$-lines originating from the 
NW neighbor box have not been drawn here for simplicity 
but are included in the argument.}: 
\begin{equation*}
\begin{picture}(200,200)(-55,-58)
\multiput(0,0)(0,85){2}{
\put(-18,-33){
\multiput(0,0)(85,0){2}{\put(0,0){\line(0,1){85}}}
\multiput(0,0)(0,85){2}{\put(0,0){\line(1,0){85}}}}}
\put(0,25){
\put(22,90){1}
\multiput(0,-1)(0,2.5){2}{\put(23.1,85.7){.}}
\put(22,75){1}
\put(29,82){\rotatebox[origin=c]{-90}
{$\overbrace{\phantom{\quad\quad}}$}}
\put(38,83){$\mu^1$}
}

\multiput(0,4)(0,2.5){3}{\put(23.1,85.7){.}}

\put(0,-8){
\put(15.5,90){$a\!-\!1$}
\multiput(0,-2)(0,2.5){2}{\put(23.1,85.7){.}}
\put(15.5,75){$a\!-\!1$}
\put(35,82){\rotatebox[origin=c]{-90}
{$\overbrace{\phantom{\quad\quad}}$}}
\put(44,83){$\mu^{a-1}$}
}

\drawline(19,118)(-40,118)\drawline(-40,118)(-40,45)
\put(-40,45){\vector(-1,0){17}}

\drawline(12,70)(-30,70)\drawline(-30,70)(-30,1)
\put(-30,1){\vector(-1,0){27}}


\put(22,40){$\bullet$}\put(75,42.5){\vector(-1,0){45}}\put(79,39.5){$1$}
\multiput(0,4.5)(0,3){2}{\put(23.4,30){.}}
\put(22,28){$\bullet$}\put(75,30.5){\vector(-1,0){45}}\put(79,27.5){$1$}
\put(5,34){$\alpha^1$}

\multiput(0,-10.4)(0,3){3}{\put(23.4,30){.}}
\multiput(-26,13)(0,4){7}{\put(23.4,67){.}}

\put(0,-27){
\put(22,40){$\bullet$}\put(75,42.5){\vector(-1,0){45}}\put(79,39.5){$r$}
\multiput(0,4.5)(0,3){2}{\put(23.4,30){.}}
\put(22,28){$\bullet$}\put(75,30.5){\vector(-1,0){45}}\put(79,27.5){$r$}
\put(5,34){$\alpha^r$}
}

\put(75,-7){\vector(-1,0){132}}\put(79,-10){$r$}\put(-20.5,-9.3){$\circ$}
\put(75,-16){\vector(-1,0){132}}\put(79,-19){$r$}\put(-20.5,-18.3){$\circ$}
\put(86,-15){$\bigr\} \ge 1$}

\multiput(-72,-57)(0,4){7}{\put(23.4,67){.}}
\put(-68,20){\rotatebox[origin=c]{90}
{$\overbrace{\phantom{\qquad\quad\quad\;}}$}}
\put(-99,20){$P_+(\mu)$}

\multiput(0,-43.7)(0,2.7){2}{\put(23.4,30){.}}
\multiput(0,-53.5)(0,3){2}{\put(23.4,30){.}}

\put(0,1.5){
\put(75,-28){\vector(-1,0){132}}\put(79,-31){$a\!-\!1$}
}

\put(-17,-49){\vector(0,1){12}}
\put(-20,-56){$\eta_{i-1} = w_r$}

\put(16,34){$\bigl\{$}
\put(16,7.6){$\bigl\{$}

\put(113,5){\rotatebox[origin=c]{-90}{
$\overbrace{\phantom{\qquad\qquad\qquad\quad\;\;}}$}}
\put(125,5){$\#\{\text{color $b$ $H$-lines}\}=s_b$}
\put(140,-8){$(b \in [1,a-1])$}

\end{picture}
\end{equation*}
The ${\bf a}^+$ and ${\bf a}^-$ at the $b$-th tensor component
of $P_+(\mu)$ and $P_-(\overline{\alpha})$
work as the emission and absorption of $H$-lines of color $b$, respectively.
On the other hand ${\bf k}$ compels the complete absorption allowing no $H$-lines 
to penetrate the left border while ${\bf d}={\bf 1}-{\bf k}$ 
demands at least one $H$-line with color $r$ does penetrate leading to $\eta_{i-1}=w_r$
as marked by $\circ$ in the above diagram.
The term involving $w_aK_a$ similarly selects the situation in which all the $H$-lines 
are absorbed and assigns it with the factor $\eta_{i-1}=w_a$.
\qed

\begin{example}\label{ex:knd}
Let us consider 4-iTAZRP on the length $L=4$ chain and 
the configuration ${\boldsymbol \sigma}=(3, 14, \emptyset, 22)$ in multiset representation.
It is in the basic sector $S(1,2,1,1)$.
The steady state probability is given by  
$\mathbb{P}({\boldsymbol \sigma}) = w_2^2 w_3^2 w_4^2
(w_1w_2w_4+w_2w_3w_4+w_1w_3w_4+w_1w_2w_3)$.
The relevant set (\ref{mkr}) is 
$B({\bf m}) = B_5 \otimes B_4 \otimes B_3 \otimes B_1$.
We have $\pi^{-1}({\boldsymbol \sigma})  
= \{{\bf y}_1, {\bf y}_2, {\bf y}_3, {\bf y}_4\}$, where
${\bf y}_j \in  B({\bf m})$ reads
\begin{align*}
{\bf y}_1 &= (1, 2, 0, 2) \otimes (2, 1, 1, 0) \otimes (1, 2, 0, 0) \otimes (0, 1, 0, 0),\\
{\bf y}_2 &= (1, 2, 0, 2) \otimes (2, 1, 1, 0) \otimes (0, 2, 0, 1) \otimes (1, 0, 0, 0),\\
{\bf y}_3 &= (1, 2, 0, 2) \otimes (2, 1, 1, 0) \otimes (0, 2, 0, 1) \otimes (0, 1, 0, 0),\\
{\bf y}_4 &= (1, 2, 0, 2) \otimes (2, 1, 0, 1) \otimes (1, 2, 0, 0) \otimes (0, 1, 0, 0).
\end{align*}
Then the data (\ref{eta}), (\ref{mddm}) and (\ref{mgdn}) are determined as follows.
\begin{center}
\begin{tabular}{c|ccccc}
${\bf y}$  &  ${\bf y}_1$ & ${\bf y}_2$ &  ${\bf y}_3$ & ${\bf y}_4$ \\
\hline
$\phantom{\overbrace{A}}\!\!\!\!\!\!\!\!\!\!{\boldsymbol \sigma}^1$ 
& $(\emptyset,1,\emptyset, \emptyset)$
& $(1, \emptyset,\emptyset,\emptyset)$
& $(\emptyset,1,\emptyset,\emptyset)$ 
& $(\emptyset,1,\emptyset,\emptyset)$ \\

${\boldsymbol \sigma}^2$ 
& $(1,22,\emptyset, \emptyset)$
& $(\emptyset,22,\emptyset,1)$
& $(\emptyset,22,\emptyset,1)$ 
& $(1,22,\emptyset,\emptyset)$ \\

${\boldsymbol \sigma}^3$ 
& $(22, 3,1, \emptyset)$
& $(22, 3,1,\emptyset)$
& $(22, 3,1,\emptyset)$ 
& $(22, 3, \emptyset,1)$ \\

$\varpi_{{\bf y}^2}({\boldsymbol \sigma}^1)$ 
& $w_2^3$ & $w_2^3$ & $w_1w_2^2$ & $w_2^3$ \\ 

$\varpi_{{\bf y}^3}({\boldsymbol \sigma}^2)$ 
& $w_1w_3^2$ & $w_3^3$ & $w_3^3$ & $w_3^3$ \\ 

$\varpi_{{\bf y}^4}({\boldsymbol \sigma}^3)$ 
& $w_4^3$ & $w_4^3$ & $w_4^3$ & $w_1w_4^2$ \\

$W({\bf y})$ 
& $w_1w_2^3w_3^2w_4^3$ 
& $w_2^3w_3^3w_4^3$ 
& $w_1w_2^2w_3^3w_4^3$ 
& $w_1w_2^3w_3^3w_4^2$ 

\end{tabular}
\end{center}

\vspace{0.2cm}\noindent
Thus $\mathbb{P}({\boldsymbol \sigma}) 
= W({\bf y}_1) + W({\bf y}_2) + W({\bf y}_3) + W({\bf y}_4) $ indeed holds.
\end{example}

\section{Discussion}\label{sec:d}
In \cite{KMO3} the homogeneous $n$-TAZRP was studied and 
the steady state probabilities were obtained in 
the matrix product form $\mathrm{Tr}(X'_{\sigma_1}\cdots X'_{\sigma_L})$.
The operator $X'_\sigma$ has the structure of a corner transfer matrix \cite{Bax}
of a $\{{\bf 1}, {\bf a}^\pm, {\bf k}\}$-valued vertex model.
It is defined by \cite[eq.(5.10)]{KMO3} (with 
the interchange $\sigma^b\leftrightarrow \sigma^{n+1-b}$
to adjust the convention to this paper), and 
acts on the same space $F^{\otimes n(n-1)/2}$ as
the $X_\sigma$ (\ref{X}) in this paper does.
However the equality 
$X'_\sigma= \mathrm{const }\,X_\sigma |_{w_1=\cdots = w_n=1}$ 
up to permutations of the tensor components 
does not hold in general except $n=2$.
We hope to report on the relation 
between $X_\sigma$ and $X'_\sigma$, and thereby 
the approaches in this paper and \cite{KMO3} elsewhere.

\section*{Acknowledgments}
This work is supported by 
Grants-in-Aid for Scientific Research No.~15K04892,
No.~15K13429 and No.~23340007 from JSPS.

\end{document}